\title{VizRec: a framework for secure data exploration via visual representation}
\newcommand{\email}[1]{{\texttt #1}}
\author{Lorenzo De Stefani \\
Brown University \\
Providence, RI \\
\email{lorenzo\_destefani@brown.edu}
\\[3ex]
\textbf{Tim Kraska} \\
Massachusetts Institute of Technology \\
Cambridge, MA \\
\email{kraska@mit.edu}
\And
Leonhard F. Spiegelberg \\
Brown University \\
Providence , RI \\ 
\email{leonhard\_spiegelberg@brown.edu} 
\\[3ex]
\textbf{Eli Upfal} \\
Brown University \\
Providence, RI \\
\email{eli@cs.brown.edu}
}
\newcommand{\system}{VizRec\xspace} %The name is kinda boring
\DeclareMathOperator*{\dom}{\mathrm{dom}}
\newcommand{\PrD}[2]{\textrm{Pr}_{#1}\left(#2\right)}
\newcommand{\chisqtest}{$\chi^2$-test~}
\newcommand{\mdataset}{\mathcal{D}}
\newcommand{\dataset}{$\mathcal{D}$}
\newcommand{\vis}[1]{\mathcal{V}_{#1}}
\newcommand{\Expe}[2]{E_{#1}\left[#2\right]}
\newtheorem{definition}{Definition}
\newtheorem{theorem}{Theorem}
\newtheorem{corollary}{Corollary}
\newtheorem{lemma}{Lemma}
\newtheorem{fact}{Fact}
\begin{document}
\maketitle

\begin{abstract}
\onehalfspacing
Visual representations of data (visualizations) are tools of great importance and widespread use in data analytics as they provide users visual insight to patterns %and statistical behaviours of 
in the observed data in a simple and effective way.
However, since visualizations tools are applied to sample data, there is a a risk of visualizing random fluctuations in the sample rather than a true pattern in the data. This problem is even more significant when visualization is used to identify \emph{interesting} patterns among many possible possibilities, or to identify an interesting deviation in a pair of observations among many possible pairs,
%While useful, these visualizations may be affected by %the noise inherently caused by the sampling procedure %used to obtain the data itself. This phenomenon is %even more prevalent when visualizations are used in %order to infer whether two patterns are %\emph{interesting} with respect to one another 
as commonly done in \emph{visual recommendation systems}.

We present \emph{VizRec}, a framework for improving the performance of visual recommendation systems by quantifying the statistical significance of recommended visualizations. The proposed methodology allows to control the probability of misleading visual recommendations using both classical statistical testing procedures and a novel application of the Vapnik Chervonenkis (VC) dimension method which is a fundamental concept in statistical learning theory.
\end{abstract}
% \doublespacing

% keywords can be removed
\keywords{Data Visualization, False Discoveries, Machine Learning, Multiple Comparison Problem}

\section{Introduction}
\label{sec:intro}
Visual recommendation engines, such as SeeDB~\cite{vartak2014seedb}, Voyager2~\cite{wongsuphasawat2017voyager}, Rank-By-Feature~\cite{rankbyfeatue}, Show Me~\cite{showme}, MuVE~\cite{muve}, VizDeck~\cite{vizdeck}, DeepEye~\cite{deepeye}, or Draco-Learn \cite{draco}, aim to help users to more quickly explore a dataset and find interesting insights. 
To achieve that goal they use widely different approaches and techniques. 
For example, SeeDB~\cite{vartak2014seedb} makes recommendations based on a reference view; it tries to find a visualization which is very different from the one the user has currently on the screen. 
In contrast, DeepEye~\cite{deepeye} tries to generally recommend a good visualization for a given dataset based on previously generated visualizations. 

However, all these systems do have in common that they can significantly increase the risk of finding false insights. 
This happens in the moment a visualization is not just a pretty picture but a tool presenting facts about the data to the user. 
For example, consider a user exploring a dataset containing information about different wines.
After browsing the data for a bit, she creates a visualization of ranking by origin showing that wines from France are higher rated. 
If her only takeaway from the visual is, that in {\bf this particular dataset} wines from France have a higher rating, there is no risk of false insight. 
Essentially, in this case no inference happens as she is completely aware that the next dataset could look entirely different. 
However, it is neither in the nature of users to constraint themselves to such thinking~\cite{mcpvisual}, nor would in many cases such a visualization be very insightful. 
Rather, based on the visualization she most likely would infer that French wines are {\em generally} rated higher; {\em generalizing} her visualization insight to general datasets and thus creating an actually interesting insight.  
Statistically savvy users will now test this insight on whether this generalization is actually statistically valid using the appropriate test. 
Even more technically savvy users will also consider other hypothesis they tried and adjust the statistical testing procedure to account for the multiple comparisons problem.
This is important as every additional hypothesis, explicitly expressed as a test or implicitly observed through a visualization, increases the risk of finding insights which are just spurious effects. 

However, what happens when the visualization recommendations are generated by one of the above systems?
First and most importantly, the user does not know if the effect shown by the visualization is actually significant or not. 
Even worse, she can not use a standard statistical method and ``simply'' test  the effect shown in the visualization for significance. 
Visual recommendation engines are potentially checking thousands of visualizations for their interesting-factor (e.g., in case of SeeDB how different the visualization is to the current one) in just a few seconds.
As a result, by testing thousands of visualizations it is almost guaranteed that the system will find something ``\emph{interesting}'' regardless of whether the observed phenomenon is actually statistically valid or not.
A test for significance for the recommended visualization should therefore consider the whole history of tests done by the recommendation engine. 

Advocates of visual recommendation engines usually argue that visual recommendations systems are meant to be as hypothesis generation engines, which should always be validated on a separate hold-out dataset. 
While this is a valid method to control false discoveries, it is also important to understand its implications: 
(1) None, really none, of the found insights from the exploration dataset should be regarded as an actual insight before they are validated. 
This is clearly problematic if one observation may steer towards another during the exploration. 
(2) Splitting a dataset into an exploration and a hold-out set can significantly reduce the power (i.e., the chance to find actual true phenomena). 
(3) The hold-out needs to be controlled for the multi-hypothesis problem unless the user only wants to use it exactly once for a single test.

In this paper, we present an alternative approach, called \system, a framework to make visual recommendation engines ``safe''.
We focus on the visual recommendation technique proposed by SeeDB as it uses a clear semantic for what ``interesting'' means. 
However, our techniques can be adjusted to other visualization frameworks or even hypothesis generation tools, like Data Polygamy, as long as the ``interesting'' criterion can be expressed as a statistical test. 
The core idea of \system is that it not only evaluates the interesting-factor using a statistical test, but also that it automatically adjusts the significance value based on the search space of the recommendation engine to avoid the multiple-hypothesis pitfall. 
We make the following contributions:
\begin{itemize}
\item We formalize the process of making visualization recommendations as  statistical hypothesis testing.
\item We discuss how different possible approaches to make visualization recommendation engines safe based on classical statistical testing, and on the use of Chernoff-type large deviation bounds. We further discuss how the performance of both these approaches decreases due to the necessity of accounting for a high number of potentially adaptively chosen tests.
\item We propose a method based on the use of VC dimension, which allows controlling the probability of observing fake discoveries during the visualization recommendation process. \system{} allows control of the \emph{Family Wise Error Rate} (FWER) at a given level $\delta$.
\item We evaluate the performance of our system, in comparison with SeeDB via extensive experimental analysis. 
\end{itemize}
The remainder of this paper is organized as follows: In Secton~\ref{sec:prob-stat} we give a definition of the visualization recommendation problem in rigorous probabilistic terms. In Section~\ref{sec:statssafvis} we discuss possible approaches for the visualization recommendation problems, and why highlight how they both suffer due to the necessity of accounting for a high number of statistical tests. In Section~\ref{sec:VC} we introduce our \system{} approach based on VC dimension and we argue how it allows to overcome the problems discussed in Section~\ref{sec:statssafvis}. In Section~\ref{sec:discussion}, we discuss some guidelines of practical interest for implementation, while in Section~\ref{sec:experiment} we present an extensive experimental evaluation of the effectiveness of~\system{}.

\section{Problem Statement}
\label{sec:prob-stat}
In this section we first describe informally how SeeDB~\cite{vartak2014seedb} makes visual recommendations, and then formalize the problem of providing statistically valid visualization recommendations. 
% and present a first simple multiple-hypothesis control technique for it. 

\subsection{SeeDB}
SeeDB makes recommendations based on the currently shown visualization, and the corresponding \emph{reference query}. 
SeeDB explores possible recommendations (\emph{target queries}) by most commonly adding/changing the composition of the reference query. 
To rank the recommendations, SeeDB recommends to the user the most interesting target queries based on the deviations.
Thus, SeeDB assumes a larger deviation indicates a more interesting target query.
SeeDB can use different types of measures to quantify the difference between reference and target visualization.
However, earth-mover distance or KL-divergence are prevalently used. 
Furthermore, SeeDB truncates uninteresting visualizations if the deviation value (e.g., KL-divergence) is below a certain threshold, which can be seen as a minimum visual distance. 

To showcase how SeeDB can recommend spurious correlations, we used a survey conducted on Amazon Mechanical Turk \cite{CIDRBrown} with  $2,644$ answers for $35$ (mostly unrelated) multiple-choice questions. 
Questions range from \emph{Would you rather drive an electric car or a gas-powered car?} to \emph{What is your highest achieved education level?} and \emph{Do you play Pokemon Go?}.
% Each answer by a worker is a record, whereas the answers to the questions are the attributes of the record. 
\begin{figure}[H]
\begin{center}
    \begin{subfigure}[b]{0.3\textwidth}
        \centering
        \includegraphics[width=\textwidth]{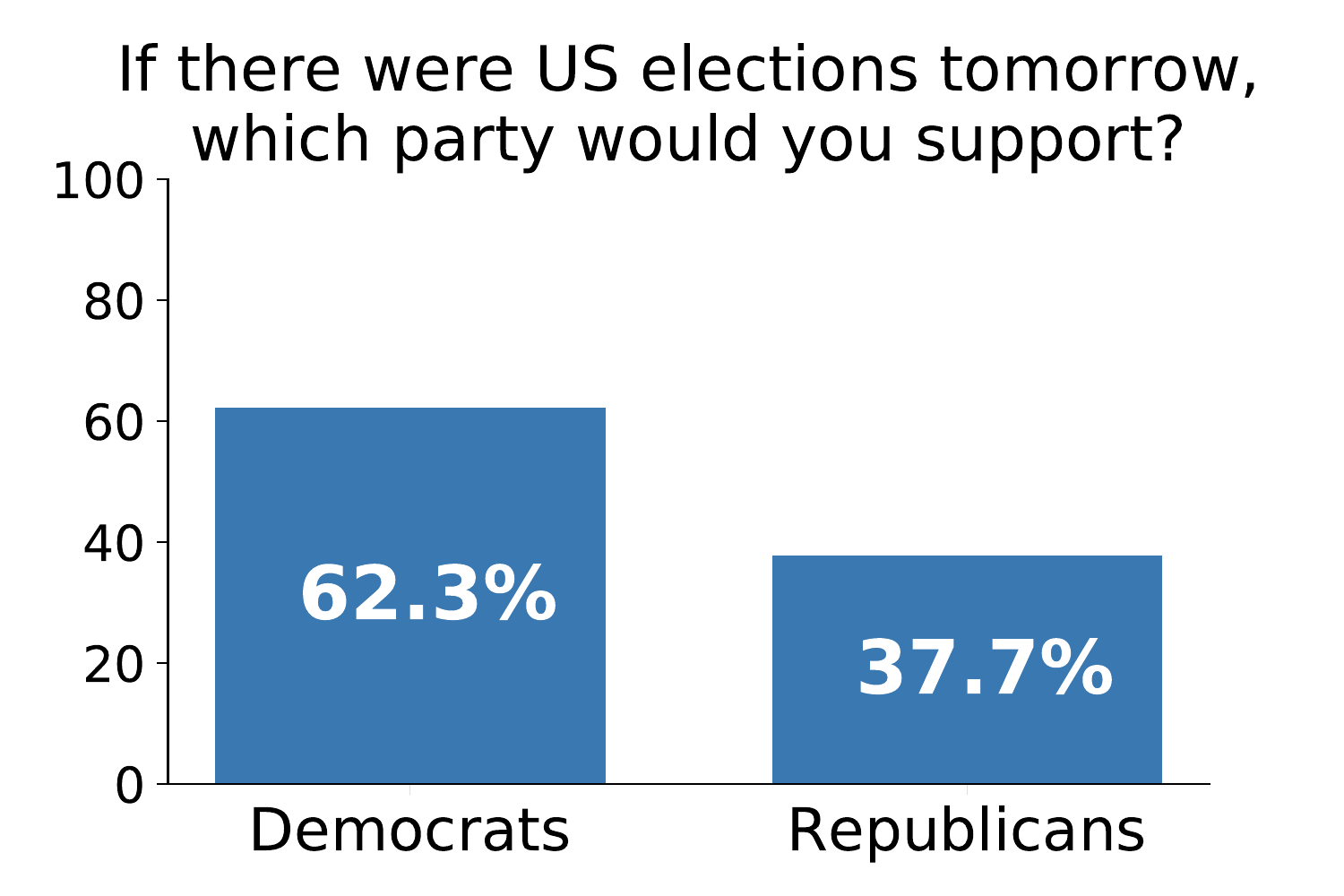}
        \caption{Reference View}
        \label{fig:seedb-mturk-example-refer}
    \end{subfigure}
    \begin{subfigure}[b]{0.3\textwidth}
        \centering
        \includegraphics[width=\textwidth]{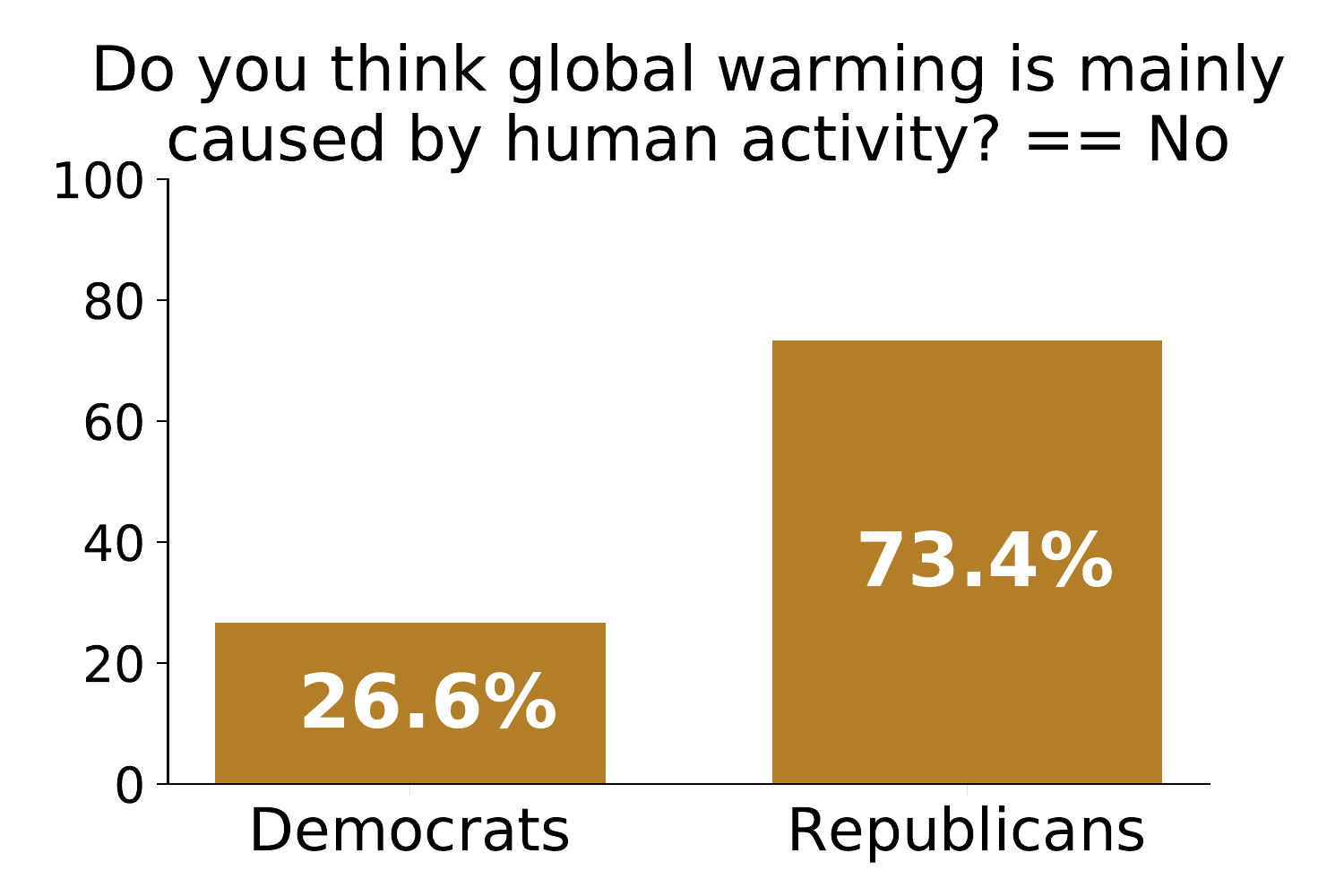}
        \caption{Recommended View 1}
        \label{fig:seedb-mturk-example-target1}
    \end{subfigure}
    \\
    \begin{subfigure}[b]{0.3\textwidth}
        \centering
        \includegraphics[width=\textwidth]{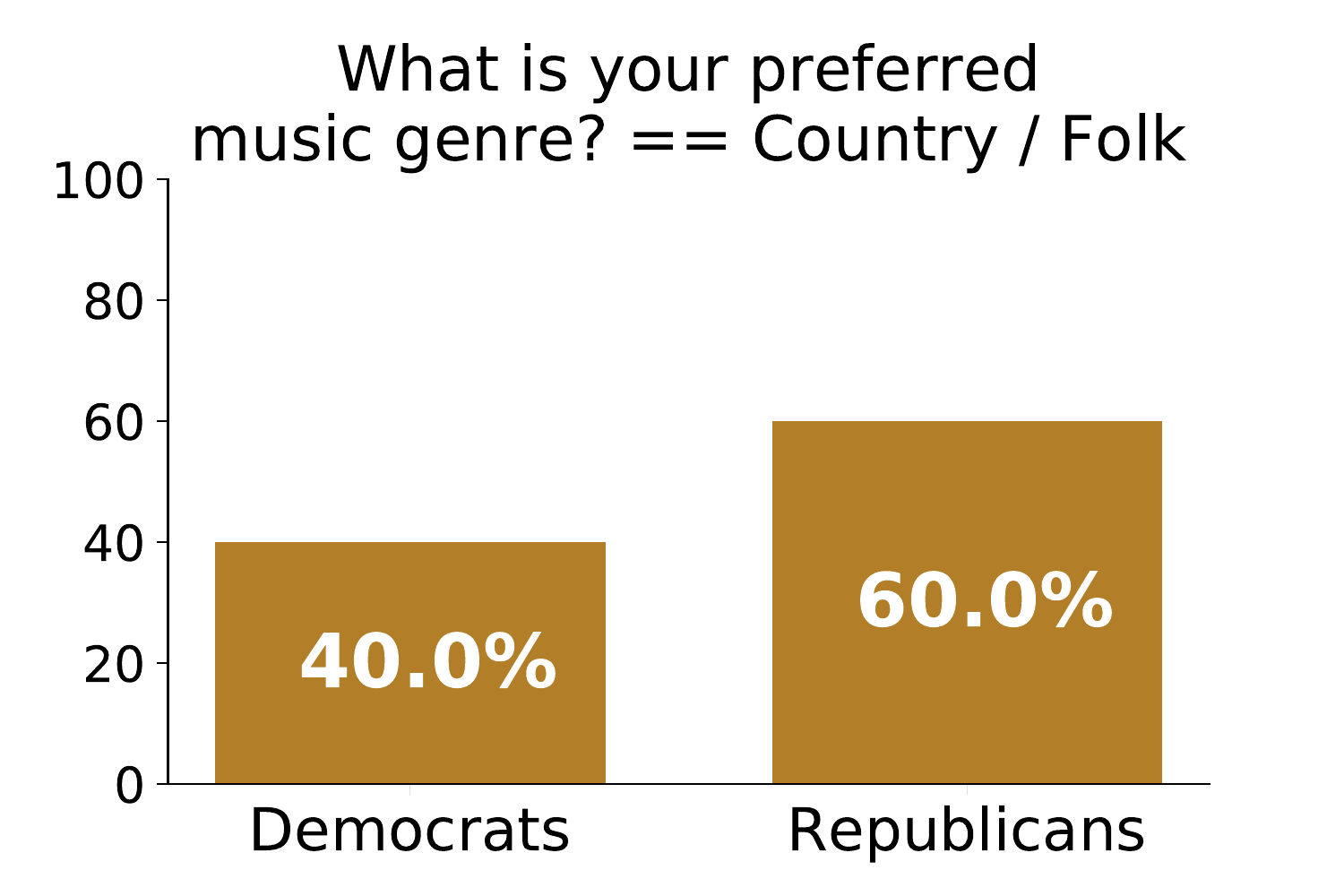}
        \caption{Recommended View 2}
        \label{fig:seedb-mturk-example-target2}
    \end{subfigure}
    \begin{subfigure}[b]{0.3\textwidth}
        \centering
        \includegraphics[width=\textwidth]{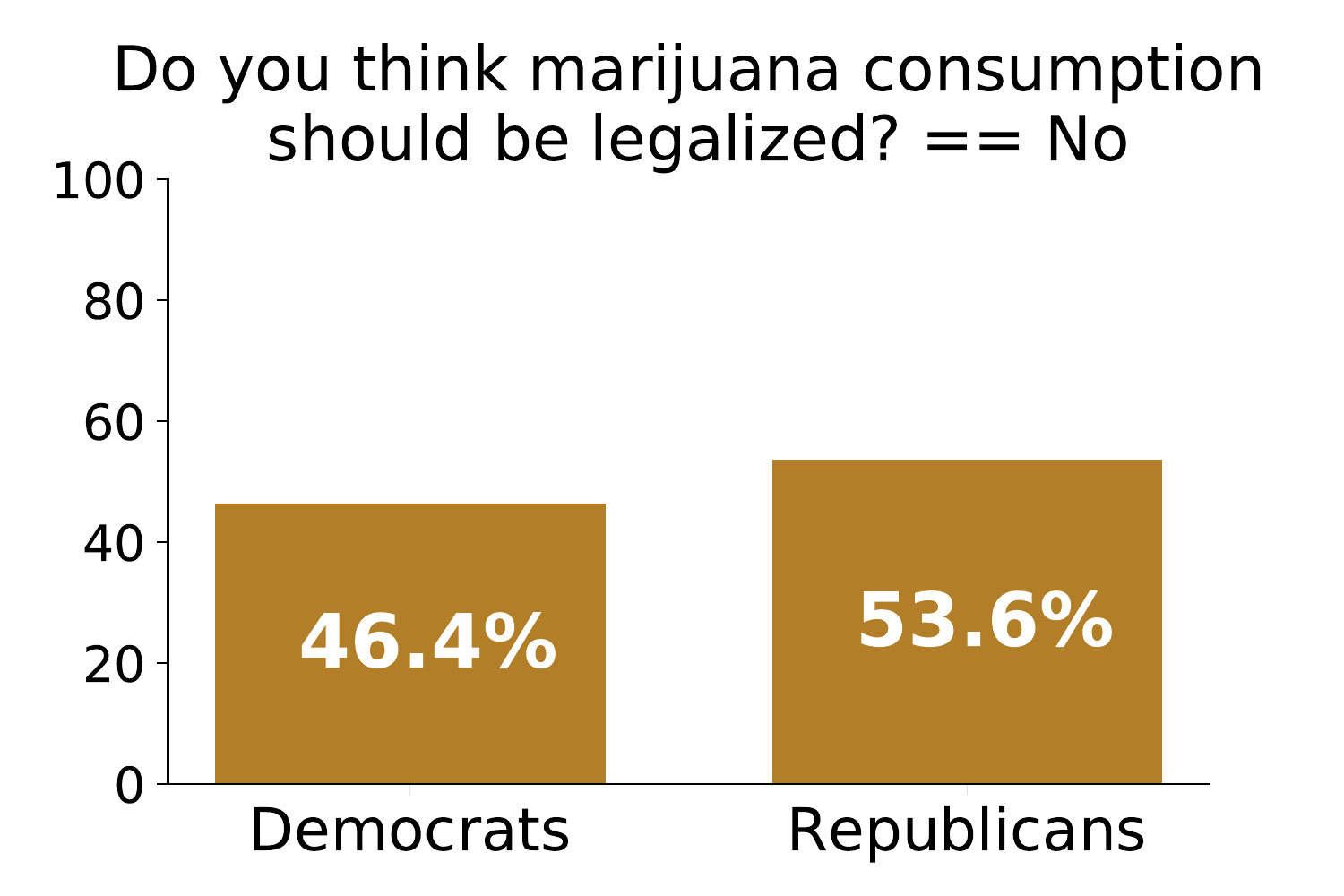}
        \caption{Recommended View 3}
        \label{fig:seedb-mturk-example-target3}
    \end{subfigure}
\end{center}
    \caption{An example of SeeDB \protect\cite{vartak2014seedb} on survey data.}
    \label{fig:seedb-mturk}
\end{figure}
Suppose the user analyzes U.S. voting trends over this data set as shown in Figure~\ref{fig:seedb-mturk-example-refer}. 
Based on this reference view created by the user, the actual SeeDB algorithm over our data set would recommend Figures~\ref{fig:seedb-mturk-example-target1}-\ref{fig:seedb-mturk-example-target3} as visualizations (settings of SeeDB are similar to the ones in \cite{vartak2014seedb}'s Figure~1).
All of the recommended visualizations by SeeDB seem first to be interesting as they clearly indicate a trend reversal and a correlation between certain beliefs and voting behavior. However, these trends could also be solely a random discovery in the way the visualization is just produced because for some reason the dataset by chance just produced such bar graphs for these queries. Indeed, in \ref{sec:anecdotal_examples} we show that some of these recommendations are not \emph{statistically safe recommendations}.

Having a larger dataset may help an analyst to be more certain that the recommendations are actually significant. However, considering the total number of samples used for these visualizations bears the fundamental question, how big the data actually needs to be in order to guarantee that there are no false discoveries. Further, when automatically exploring the dataset at some point with enough filtering as done by SeeDB every dataset becomes ``too small'' to guarantee anything. In the following, we now formalize the recommendation process of SeeDB, introduce the visual recommendation problem and its relation to hypothesis testing.
\subsection{Problem set-up}\label{sec:model}  
%unified with sentence in previous section
% In the following, we now formalize the visual recommendation problem  from Section~\ref{sec:prob-stat} and its relation to hypothesis testing.

Let $\Omega$ denote the \emph{global sample space}, that is the set comprised of the records of the entire population of interest (e.g., the records of US citizens). We can imagine $\Omega$ in the form of a two-dimensional, relational table with $N$ rows and $m$ columns, where each column represents a \emph{feature} or \emph{attribute} of the records. 

%Tim: This whole subsection is not needed. First, all records of all US citizens is not really a lot of data for current systems. And this whole computational argument... at least for the example and all the data we have in the exp. section, is not a problem either.
%In general it is not possible to operate directly on $\Omega$ for multiple reasons: (a) it is unrealistic to obtain all rows that compose $\Omega$ (e.g., there is an intrinsic high cost in acquiring a large \emph{complete} dataset); (b) manipulating a dataset with a massive amount of records is often unpractical or even infeasible when the sheer massive size of the data requires too much computational time for even simple analysis of properties of the data itself. Further, it is often the case that  high quality estimates of the properties of interest provide just as much insight in the properties of the data as the exact values would, thus making the computational effort required to compute such exact values \emph{pointless} besides \emph{unpractical}. 

In many practical scenarios the analyst is in possession of a \emph{sample} $\mathcal D$, ${\mathcal D}\subset\Omega$, composed by a much smaller number of records $n<<N$, and aims to \emph{estimate} the properties of $\Omega$ by analyzing the properties of the smaller dataset.

In this work we assume that the input to the recommendation engine is a dataset, $\mathcal D$, consists of $n$ records chosen uniformly at random and independently from the universe $\Omega$ (e.g., our survey data).
We refer to $\mathcal D$ as the \emph{sample dataset} or the \emph{training dataset} and denote by $\mathcal{F}_\mathcal{D}$ the probability distribution that generated the sample $\mathcal D$. Alternatively, one can consider $\mathcal D$ as a sample of size $n$ from a distribution $\mathcal{F}_\mathcal{D}$ that  corresponds to a possibly infinite domain. 

As discussed in \autoref{sec:intro} rather than enabling users to only interpret visualizations for \emph{some particular data set}, we want to make sure that when they try to generalize results a system provides them with statistical guarantees. Hence, we consider the input $\mathcal{D}$  not as a universal but as a \emph{random sample} of the universe, and we focus on observation in $\mathcal{D}$ that apply to the entire universe. For example \emph{prices in city centers are higher than in neighboring districts} instead of \emph{prices in the city center for apartments A, B, C, ... in Manhattan are higher than for apartments X, Y, Z, ... in the Bronx, Queens and Brooklyn neighboring districts for the data collected in 2017 on July 21st}.

We divide the features (or attributes) of the records in $\Omega$, and hence $\mathcal D$, into four groups: 
\begin{enumerate}
    \item \textbf{binary features}, taking values in $\lbrace 0, 1 \rbrace$ (e.g., unit has AC).
    \item \textbf{discrete features with a total order}(e.g.,  \#bedrooms, \#bathrooms).
    \item \textbf{continuous  features with a total order}, (e.g., price of the unit).
    \item \textbf{categorical features}, taking values in a finite unordered  domain (e.g., city, zipcode).
\end{enumerate}
We assume that  each feature is equipped with a \emph{natural metric}. This is achieved, by mapping the values of a feature to a real number (e.g., for a boolean feature by mapping the value $\mathrm{True}$ to $0$, and $\mathrm{False}$ to $1$). 
\subsection{Visualizations} \label{sec:Visualizations}
A common form of visualization used in the dice-and-slice exploration setting of an OLAP (\emph{OnLine Analytical Processing}) data cube is a bar graph. We formalize the type of visualizations we investigate in our approach in the following way:

\begin{definition}
\label{def:visualization}
A visualization $V$ is a tuple 
\[\begin{pmatrix} \mathcal{D}, F, X, Y, \mathrm{AGG}\end{pmatrix}\]
which can be represented as a bar graph and which describes the result of a \emph{query} of the form
\begin{center}
\RecustomVerbatimEnvironment{Verbatim}{BVerbatim}{}
\begin{minted}{sql}
SELECT X, COUNT(Y) FROM D WHERE F GROUP BY X
\end{minted} 
\end{center}
with the aggregate \sqlinline/COUNT/, represented on the $y$ axis,  being partitioned according to the values of a discrete feature \sqlinline/X/, after restricting the records of the input dataset $\mathcal{D}$ being considered to a subset for which the filter predicate \sqlinline/F/ holds.
\end{definition}

The \emph{support} of a visualization $V$ is the number of records of $D$ which satisfy the predicate $F$, and is denoted as $|V|$.
The \emph{selectivity} of a visualization $V$, denoted as $\gamma_{V}$, is defined as the fraction of records which satisfy $F$, that is 
\begin{equation}
    \gamma_{V} := \frac{|\mdataset{}|F|}{|\mdataset{}|}.
\end{equation}
Note that if the group-by attribute $X$ being selected is not discrete, it will also be necessary to determinate a finite set ranges for its value, or ``\emph{buckets}'', to be used in the visualization.\\
The aggregate \sqlinline/COUNT(Y)/ \emph{counts} of records which satisfy the query predicate grouped according to the values of the feature \sqlinline/X/, henceforth referred as the \emph{group-by} feature.

While in this work we focus on the aggregate \sqlinline/COUNT(Y)/, our approach can be extended with minor modifications to the \emph{average}  \sqlinline/AVG(Y)/ aggregate, which is given by the average of the values of the records which satisfy the query predicate grouped according to the values of the group-by feature. 

Though other aggregates like \sqlinline/MIN(Y),MAX(Y),SUM(Y)/ are used in systems like SeeDB \cite{vartak2014seedb} we believe that they do not add value over \sqlinline/COUNT(Y)/ or \sqlinline/AVG(Y)/ aggregates. Even worse, they may lead to misleading visualizations. \sqlinline/MIN(Y),MAX(Y)/ aggregates are \emph{inherently} not suited to represent \emph{statistically significant} behavior of distributions. Rather than using \sqlinline/MIN(Y)/ or \sqlinline/MAX(Y)/ aggregates, a user should consider \emph{conditional expectations} (e.g., aggregates in the form $Y \vert Y > c$ for some constant $c \in \mathbb{R}$) to explore extreme values following the concepts of \emph{extreme-value theory} as described in \cite{gissibl2017big, fasen2014quantifying}. 
% Hence, a system should focus on averaging and counting queries.
While our results can be easily generalized to other types of visualizations (e.g., \emph{heat maps}), for the sake of applicability, in this work we focus on the type captured by Definition~\ref{def:visualization}.
\vspace{-2mm}
\\
\\
\noindent\textbf{Visualizations as distributions:} When considering the \sqlinline/COUNT/ aggregate, it is possible to interpret the data distribution represented by a histogram visualization $V$ as representative of the \emph{probability mass function} (pmf) of the \emph{discrete} random variable $X$ which takes the values of the group by feature \sqlinline/X/, each with probability corresponding  count of the records for each column normalized according to the support of the visualization $V_1$. Such distribution, does indeed correspond to the distribution of the values of the group-by feature  \sqlinline/X/ with respect to the distribution $\mathcal{D}$ after \emph{conditioning} (or \emph{filtering}) with respect to the predicate $F$ associated with $V$. Such correspondence between visualizations and distributions provides us a \emph{natural criteria} to compare visualizations by evaluating their \emph{statistical difference}.
\vspace{-3.5mm}
\subsection{Visualization recommendations}
\label{sec:visualization_recommendation}
Following the paradigm of SeeDB~\cite{vartak2014seedb}, given a first ``\emph{starting} visualization $V_1$ we aim to identify other ``\emph{interesting}'' visualizations to be recommended. In particular, this follows the widespread mantra of \emph{Overview first, zoom and filter, then details-on-demand} \cite{shneiderman1996eyes} where a visualization system ideally lets the user first pick an interesting reference visualization and helps him then to automate the zoom, filter and details-on-demand tasks.

In this work we define a visualization $V_2$ to be interesting with respect to a starting visualization $V_1$ if $V_2$ and $V_1$ are \emph{different}, that is, if they represent a different statistical behavior (i.e., different distribution) of the common group-by feature \sqlinline/X/  under the predicates associated with $V_1$ and $V_2$, respectively. Consistently, the greater the difference between the reference $V_1$ and the candidate $V_2$, the higher the interest of $V_2$ as a \emph{candidate recommendation} for $V_1$.

Note that the constraint according to which possible recommended visualization must share the same group-by feature as the starting visualization is a simple consequence of the fact that we are interested in the study of how different filter predicate conditions do influence the behaviors of the same feature \sqlinline/X/. In the absence of such constraint, the analyst may consider how \emph{different} features behave, thus leading to observations of questionable interest.

Whereas the general problem of recommending interesting visualizations can come either in the way of anomaly detection or as according to the mantra of \cite{shneiderman1996eyes} we want to focus on the latter in this work. That is, we say that a candidate recommendation visualization $V_2$ is interesting with respect to a reference visualization $V_1$ if the two are ``\emph{different enough}''. That is is their distance reaches a threshold $\epsilon$ according to some distance measure~$d$.
% \vspace{-1mm}
\[
\mathcal{V}_2\;\;\text{interesting w.r.t}\;\; \mathcal{V}_1 \Longleftrightarrow d(V_1, V_2) > \epsilon.
\]
% \vspace{-1mm}
In its simplest form, $\epsilon$ may be zero. However it makes more sense to define $\epsilon$ in terms of a minimum visual distance $\epsilon_V$ required by a user to spot a difference\cite{stern2010just} when shown both the reference and a visualization of interest.

%Furthermore, we introduce the notion of a \emph{statistically significant distance threshold} $\epsilon_S$ in this paper in the way that in addition to the user interface guaranteeing a pair of visualizations exceeding visual difference ($\epsilon > \epsilon_V$), a system should make sure that the distance observed between $V_1$ and $V_2$ implies similar relation in the universe $\Omega$, and is not just a result of random fluctuation in the sampling of $\mathcal D$. Thus, a visualization is safe to recommend if $\epsilon > \max \lbrace \epsilon_V, \epsilon_S \rbrace$. 

While there is a high degree of generality in the selection of the notion of difference between visualizations to be used, in this work, we leverage the correspondence between visualizations which corresponds to Definition~\ref{def:visualization} and the pmf of the group-by feature condition on the filter of the visualization, and we measure the difference between visualizations based on the difference between the associated pmfs. 

In this work, we use the ``\emph{Chebyschev distance}'' to quantify the difference between visualizations. Given two pmfs $\mathcal{D}_{1}$ and $\mathcal{D}_{2}$ over the same support set $\mathcal{X}= \{x_1,x_2,\ldots,x_n\}$, the Chebyscev distance between $\mathcal{D}_{1}$ and $\mathcal{D}_{2}$ is given by:
\begin{equation}
\label{eq:distancedef}
d\left(D_1,D_2\right) = \max_{x\in \mathcal{X}}| \PrD{\mathcal{D}_1}{x} - \PrD{\mathcal{D}_2}{x}|,
\end{equation}
where $\PrD{\mathcal{D}_1}{x}$ (resp., $\PrD{\mathcal{D}_2}{x}$) denotes the probability of a random variable taking value $x$ according to the distribution $\PrD{\mathcal{D}_1}{x}$ (resp., $\PrD{\mathcal{D}_2}{x}$). 

This choice of difference metric is particularly appropriate when comparing visualizations as it highlights the maximum difference between the \emph{relative frequency} of a certain value of the group-by feature in according to the conditional distribution given by the filter predicate of the two visualizations being considered. In other words, when comparing two histogram visualizations the Chebyschev distance captures the maximum difference between pairs of corresponding columns of the histograms.

\section{Statistically safe visualizations and recommendations}\label{sec:statssafvis}

While the statistical pitfalls of exploratory data analysis are well understood and documented (in scholarly papers~\cite{russo2015much, taylor2015statistical},  "surprising statistical" discoveries~\cite{Spurious,pHacking}, and even a famous cartoon~\cite{cartoon}), the connection to visualizations, and specifically visual recommendations engines, have only recently begun to be rigorously studied~\cite{controlfd, kim2015rapid, Chaudhuri:1998:RSH:276304.276343}.
In this section, we formulate the statistical problem in the visual recommendation context and explore simple probabilistic techniques to solve it. A more powerful method, based on statistical learning theory is presented in the Section~\ref{sec:VC}.  While, for concreteness, in our presentation we focus on the SeeDB paradigms, the proposed model and techniques can  be used for other recommendation systems too (e.g., Data Polygamy~\cite{DataPolygamy}).

A first crucial observation is that a system that provides visual representation of data and aims to highlight an \emph{interesting relationship} between visualizations 
should provide tools to allow the analyst to ascertain that the phenomena being observed are actually \emph{statistically relevant}. 
A recommender system should ensure that visualized result displays characteristics that are \emph{non-random} and \emph{visually intelligible}. That is a user looking at two visualizations should both be able to understand that they are different and why they are different without worrying whether visual features are due to missing support or random noise.

% In particular this work focuses on how we can ensure statistical soundness of the zoom and filter stage.

Recall that a dataset \dataset{} available to our visualization decision algorithm is a sample from an underlying distribution $\mathcal{F}_\mathcal{D}$. 
Assume that a particular visualization is interesting with respect to $\mathcal{D}$. The question we are trying to answer is, how likely it is that the visualization is also interesting with respect to the underlying distribution $\mathcal{F}_\mathcal{D}$.

In probabilistic terms, each query with a filter predicate \sqlinline/F/ corresponds to an event $E$ over $\Omega$. 
For concreteness consider a visualization of a histogram of a (discrete and finite) variable $X$  conditioned on an event $E \not= \emptyset$, denoted $X \vert E$.

 The true values (in $\mathcal{F}_\mathcal{D}$) for $k \in \dom{X}$ are given by
\[p_k := \mathbb{P}(X = k \vert E) = %\mathbb{E}\left( \mathbbm{1}_{\lbrace X = k \rbrace} \vert E \right) = 
\frac{\mathbb{P}\left( X = k, E\right)}{\mathbb{P}(E)}.
\] 
We estimate these values in a dataset $\{X_1,\dots,X_n\} \subseteq \mathcal{D}$ of size $n$ by
\begin{equation}\label{eq:empiricalp}
  \hat{p}_k :=  \frac{\sum_{i=1}^n \mathbbm{1}_{\lbrace X_i = k, \; X_i \in E \rbrace}}
{\sum_{i=1}^n \mathbbm{1}_{\lbrace X_i \in E \rbrace}}.  
\end{equation}

If in $\mathcal{D}$ the histogram of $X\vert E$ is visually different from the histogram of $X$, what can we rigorously predict about the difference between the histograms in $\mathcal{F}_\mathcal{D}$?
 
Therefore, we say that the difference between two visualizations $\mathcal{V}_1$ and $\mathcal{V}_2$  is \emph{statistically significant} if and only if the difference observed between the two in the finite sample $\mathcal{D}$ is due to an actual difference between the two histograms with respect to the distribution $\mathcal{F}_{\mathcal{D}}$.The recommendation problem thus becomes in its general form to recommend a candidate visualization $\mathcal{V}_2$ for a reference $\mathcal{V}_1$  only if their corresponding histograms are statistically different with respect to the \emph{true} underlying distribution $\mathcal{F}_\mathcal{D}$.

Our goal is to verify that interesting visualization flagged by our algorithm with respect to $\mathcal{D}$ \emph{generalize} to interesting visualizations with respect to  $\mathcal{F}_\mathcal{D}$.
\subsection{Classical statistical testing}
In the \emph{classical} statistical testing setting, our problem could be framed in two ways: Either it could be formulated as a \emph{goodness-of-fit} test or as a \emph{homogeneity} test. In a \emph{goodness-of-fit}-test for a given starting reference visualization $\mathcal{V}_1$ and a candidate visualization recommendation  $\mathcal{V}_2$ a \emph{hypothesis} is considered in the form of whether certain statistical attributes of the reference visualization (i.e. the expected attributes) fit the corresponding observed attributes from the candidate query. Classical tests include the single test $\chi^2$-test for discrete distributions or the Kolmogorov-Smirnov test for continuous random variables. In the visualization context thus a candidate query would be selected as interesting when the null hypothesis of the attributes being similar is rejected.
A \emph{homogeneity}-test on the other hand tests the hypothesis that two samples were generated by the same underlying distribution, which may be unknown. This is done, for example, by a two samples $\chi^2$-test, or a $t$-test comparing one column in two histograms. A system based on \emph{homogeneity}-tests would then select a candidate query as interesting iff the sample corresponding to $\vis{2}$ is not homogeneous with the underlying data distribution of the reference query $\vis{1}$.

However, there are major difficulties in applying standard statistical tests to the visualization problem.
 First, depending on the input data the correct test needs to be selected. For example, when using a $\chi^2$-test over discrete attributes, each bucket must not be empty. A general rule of thumb to make sure estimates are reliable is to have at least $5$ samples per bucket. Further, there should be enough samples to actually use the $\chi^2$-test. Else, Fisher's exact test should be used for small sample sizes. In addition to each test being only applicable to certain input data, they all do guarantee a \emph{different} notion of interest. A user that is presented with the test results of one or multiple tests usually will not be able to immediately connect the results to the notion of a \emph{significant visual difference} as described in \autoref{sec:visualization_recommendation}. This brings up the problem on how comparable results of e.g a $t$-test against a $\chi^2$-test actually are in terms of \emph{visual difference}.

Second, when blindly throwing statistical tests at the visualization problem to deal with different types of input data and different sample sizes queries return, the question is whether the hypothesis being tested are not too \emph{simple} for recognizing visual difference in a meaningful way. Consider for this a $t$-test that essentially compares whether the observed mean resembled the expected mean. Naturally, a consequence is that if they differ the candidate query should get recommended. This may however lead to many wrong recommendations merely because the null hypothesis used is too \emph{simple} and gets rejected too often. The solution could be to use a test better suited for the problem, e.g.  in the form of a $\chi^2$-test. However, as we show in \autoref{sec:exp_chisq} a $\chi^2$-test is not suited best to spot a notion of \emph{statistical significant visual difference} and comes with its own problems as pointed out in \cite{delucchi1983use}. There is no free lunch and thus no universal single test that solves the visualization recommendation in general.

Third, most tests only offer merely \emph{asymptotic guarantees} because of the test statistic they use. Especially for skewed distributions or queries that return only a small number of rows this is problematic. Consider once again a $\chi^2$-test and a heavily skewed distribution over e.g. $20$ buckets. It is then very unlikely that the test statistic for the number of samples used in a visualization setting is already $\chi^2$-distributed.

Fourth, recommendations based on tests are not necessarily \emph{symmetric} in the sense that if the candidate query was used as reference query, the old reference query would not get necessarily recommended at all. This is especially true for the $\chi^2$-test.

Lastly, one might be tempted to simply combine statistical testing with a magical cutoff or subsequent selection of visualizations based on the distance measure introduced in \autoref{eq:distancedef}. I.e. an algorithm could be to first apply statistical testing to get a candidate set of potentially interesting visualizations, rank them after the distance measure and then select all visualizations as interesting that have a distance higher than $\epsilon_S$ with respect to the reference visualization. However, this approach merely delays the problem of potentially making false discoveries: Though according to the tests the visualizations may be indeed \emph{different} according to the difference guarantees the employed tests offer, they do not necessarily need to be \emph{significantly different} enough. This again can be shown using the example in \autoref{sec:exp_chisq}.
\vspace{-3mm}
\subsection{Recommendation validation via estimation}
The goal of \texttt{VizRec} is to provide an efficient and rigorous way to verify that two visualizations are indeed \emph{statistically different} with ``\emph{finite-sample}'' guarantees.

In VizRec we use the sample dataset \dataset{}, and the visualizations obtained from it, in order to obtain \emph{approximations} of the visualization according to the entire global sample space $\Omega$ using the Chernoff bound and later VC dimensions. 

Consider a single histogram visualization $\vis{1}$, and assume it is comprised of $K$ bars, one for each of the $K$ possible values of the chosen group-by feature $X$. Let $p_{\vis{1}}(x_1),\ldots,p_{\vis{1}}(x_k)$,  denote the normalized bars corresponding to $\vis{1}$. Note that such bars denote the probability of a randomly chosen record from $\Omega$ being such that $X=x_i$ conditioned on the fact that such record satisfies the predicate associated with $\vis{1}$.

Using the sample set \dataset{} we compute an approximation for the $p_{\vis{1}}(x_i)$, which we denote as $\hat{p}_{\vis{1}}(x_i)$,  following~\eqref{eq:empiricalp}. As only these approximations can be computed from the available data, any choice regarding which visualizations should be recommended may depend only on such approximations.

In order to argue guarantees regarding the reliability of such decisions, it is necessary to bound the maximum difference between the correct and estimated sizes of bars in the normalized histograms.

In particular for a given $\delta$ (i.e., our level of control for \emph{false positive recommendations}) we want to compute the minimum value $\epsilon \in (0,1)$
$$
\PrD{\mathcal{\mdataset{}}}{|p_{\vis{1}}(x_i) - \hat{p}_{\vis{1}}(x_i)| > \epsilon}<\delta.
$$

Such value $\epsilon$ would in turn quantify the accuracy of the estimation of the $p_{\vis{1}}(x_i)$'s obtained by means of their empirical counterpart $\hat{p}_{\vis{1}}(x_i)$'s.

Let $F$ denote the predicate associated with our visualization $\vis{1}$. We denote as $\Omega|F$ (resp., $\mdataset{}|F$) the subset of $\Omega$ (resp., \dataset{}) which is composed by those records that satisfy the predicate $F$. Given a choice of group-by attribute $X$ the value $p_{\vis{1}}(x_i)$ (resp., $\hat{p}_{\vis{1}}(x_i)$) corresponds to (resp., is computed as) the \emph{relative frequency} of records such that $X=x_i$ in $\Omega|F$ (resp., $\mdataset{}|F$).

\begin{fact}\label{fact:uniformsubsample}
Let \dataset{} be an uniform random sample of $\Omega$ composed by $m$ records. For any choice of predicate, as specified in Definition~\ref{def:visualization}, the subset $\mdataset{}|F$ is a uniform random sample of 	$\Omega|F$ of size $|\mdataset{}|F|$.
\end{fact}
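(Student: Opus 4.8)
The plan is to exploit the \emph{symmetry} (exchangeability) of uniform sampling together with the fact that, by Definition~\ref{def:visualization}, the predicate $F$ is evaluated on one record at a time: whether a record belongs to $\Omega|F$ is an intrinsic property of that record and does not depend on the rest of the sample. Write $N = |\Omega|$ and $M = |\Omega|F|$, and let $k := |\mdataset{}|F|$ be the (random) number of sampled records that satisfy $F$. Because the size of $\mdataset{}|F$ is itself random, the statement should be read in the conditional sense: conditioned on the event $\{|\mdataset{}|F| = k\}$, the set $\mdataset{}|F$ is uniformly distributed over all $k$-subsets of $\Omega|F$.

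First I would fix an arbitrary integer $k$ and an arbitrary $k$-subset $S \subseteq \Omega|F$ and evaluate $\PrD{\mdataset{}}{\mdataset{}|F = S}$. Since $\mdataset{}$ is a uniform random sample of $\Omega$, every $m$-subset of $\Omega$ is equally likely, and the $m$-subsets whose intersection with $\Omega|F$ equals $S$ are exactly those obtained by adjoining to $S$ an $(m-k)$-subset of the $N-M$ records of $\Omega$ that fail $F$. There are $\binom{N-M}{\,m-k\,}$ such subsets, a count that depends on $k$ but \emph{not} on the particular $S$. Hence $\PrD{\mdataset{}}{\mdataset{}|F = S}$ is the same for every $k$-subset $S$ of $\Omega|F$, and normalising by $\PrD{\mdataset{}}{|\mdataset{}|F| = k}$ gives the claimed conditional uniformity, which is exactly the assertion of the Fact.

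If one instead adopts the i.i.d.\ (``sampling with replacement'') model of Section~\ref{sec:model}, where the $m$ records are drawn independently and uniformly from $\Omega$, the argument is even shorter: conditioning a single uniform draw on the event that it lands in $\Omega|F$ produces a draw that is uniform over $\Omega|F$, and independence across the draws is preserved under this conditioning, so the records satisfying $F$ constitute an i.i.d.\ uniform sample of $\Omega|F$ of size $k$.

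The one delicate point, and the step I would treat most carefully, is the \emph{locality} of the predicate. The counting argument relies on membership in $\Omega|F$ being a deterministic function of a single record, so that the partition of $\Omega$ into $\Omega|F$ and its complement is fixed before sampling and is unaffected by which other records are selected. This is guaranteed by the row-by-row form of the filter predicate in Definition~\ref{def:visualization}; a predicate that depended on aggregate properties of the whole sample would invalidate the symmetry and hence the conclusion.
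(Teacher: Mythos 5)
Your proposal is correct, and it supplies substantially more than the paper does: the paper's entire justification for this Fact is the single sentence that it ``is a straightforward consequence of the fact that $\mathcal{D}$ is an uniform sample of $\Omega$,'' with no argument given. Your counting proof is the canonical way to make that assertion rigorous, and you handle the genuinely delicate points the paper glosses over. In particular, you correctly observe that $|\mdataset{}|F|$ is a random quantity, so the Fact must be read conditionally on the event $\{|\mdataset{}|F| = k\}$; your computation that the number of $m$-subsets of $\Omega$ meeting $\Omega|F$ exactly in a fixed $k$-subset $S$ is $\binom{N-M}{m-k}$, independent of $S$, then yields conditional uniformity exactly as claimed. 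Two further merits of your write-up are worth noting. First, the paper's sampling model in Section~2.2 actually states that $\mathcal{D}$ consists of records chosen ``uniformly at random and independently'' from $\Omega$ (i.e., with replacement, or equivalently i.i.d.\ draws from $\mathcal{F}_\mathcal{D}$), so your second, shorter argument --- conditioning each draw on landing in $\Omega|F$ preserves uniformity and independence --- is the one that matches the paper's stated model, while your hypergeometric argument covers the without-replacement reading; giving both closes a small ambiguity in the paper itself. Second, your emphasis on the \emph{locality} of the predicate (that membership in $\Omega|F$ is a deterministic, per-record property, as guaranteed by the row-wise \texttt{WHERE} clause of Definition~1) identifies precisely the hypothesis that makes the symmetry argument valid and that would fail for sample-dependent predicates; this is exactly the structural reason the paper's one-line claim is true, so your proof is best viewed as a rigorous elaboration of the paper's intended argument rather than a different route.
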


Fact~\ref{fact:uniformsubsample} is a straightforward consequence of the fact that \dataset{} is an uniform sample of $\Omega$.

From Fact~\ref{fact:uniformsubsample} and from the definitions of the $p_{\vis{1}}(x_i)$'s and of the $\hat{p}_{\vis{1}}(x_i)$'s, clearly follows that the $\hat{p}_{\vis{1}}(x_i)$'s are \emph{unbiased estimators} for the $p_{\vis{1}}(x_i)$'s. That is for, every value of the group-by feature we have:
\begin{equation}
    \Expe{\mdataset{}}{\hat{p}_{\vis{1}}(x_i)} = p_{\vis{1}}(x_i).
\end{equation}

In order to bound the estimation error $|p_{\vis{1}}(x_i) - \hat{p}_{\vis{1}}(x_i)|$ is therefore sufficient to bound the \emph{deviation from expectation} (i.e., $p_{\vis{1}}(x_i)$) of the empirical estimate (i.e., $\hat{p}_{\vis{1}}(x_i)|$).

Chernoff-Bounds~\cite{Mitzenmacher:2005:PCR:1076315}, allows to obtain such bounds as:
\begin{align}\label{eq:chernoff1}
\PrD{\mathcal{\mdataset{}}}{|p_{\vis{1}}(x_i) - \hat{p}_{\vis{1}}(x_i)|>\epsilon} &\leq e^{-2|\mdataset{}|F|\epsilon^2}
\end{align}

%Note that the strength of the bound depends on the number of samples which can effectively used for estimating $p_{\vis{1}}(x_i)$, that is only those in $\Omega|F$. 

Recall our definition of \emph{selectivity of a visualization} as $\gamma_{\vis{1}} = \frac{|\mdataset{}|F|}{\mdataset{}}$, we can then rewrite~\eqref{eq:chernoff1} as:
\begin{align}\label{eq:chernoff2}
\PrD{\mathcal{\mdataset{}}}{|p_{\vis{1}}(x_i) - \hat{p}_{\vis{1}}(x_i)|>\epsilon} &\leq e^{-2\gamma_{\vis{1}}n\epsilon^2},
\end{align}

where $n= |\mdataset{}|$. A clear consequence of~\eqref{eq:chernoff2}, is that the higher (resp., the lower) the selectivity of a visualization, the  higher (resp., the lower) the quality of the estimate. 

In \cite{kim2015rapid}, the authors use the same kind of bounds to develop a sampling algorithm which ensures  the visual property of \emph{relative ordering}. That is, for any pairs of vars corresponding to two possible values $x_1$ and $x_2$ of the same group-by feature $X$, if $\hat{p}_{\vis{1}}(x_1)>\hat{p}_{\vis{1}}(x_2)$, then, with high probability,  $p_{\vis{1}}(x_1)>p_{\vis{1}}(x_2)$ holds as well.

While the method previously described based on an application of the Chernoff bound appears to be very useful and practical, it is important to remark that in a single application it may only offer guarantees on the quality of the approximation of \emph{one} bar from a \emph{single visualization}. 

While it is in general possible to combine multiple applications of the Chernoff bound, the required correction leads to a quick and marked decrease of the quality of the bound. As an example if our visualization $\vis{1}$ is composed by $K$ bars, a bound on the quality of the approximation of all of the bars will result in:
\begin{align*}
\PrD{\mathcal{\mdataset{}}}{\max_{i=1,\ldots,K}|p_{\vis{1}}(x_i) - \hat{p}_{\vis{1}}(x_i)|>\epsilon} &\leq Ke^{-2\gamma_{\vis{1}}n\epsilon^2}.
\end{align*}

This bound is obtained using the \emph{union bound}~\cite{Mitzenmacher:2005:PCR:1076315}. While potentially tolerable for a small value of $K$, for large values the performance decrease can possibly lead to a complete loss of significance of the bound itself.

\subsection{Correcting for Adaptive Multi-Comparisons}
\label{sec:multicomparison}
The previous section showed how we can evaluate the interest of a single visualization as a statistical test.
However, this does not yet control the multiple comparisons problem which is inherent in the  recommendation system process that explores many possible visualizations. Clearly, if we let a recommendation system explore an unlimited number of possible visualizations, it will eventually find an ``\emph{interesting}'' one, even in random data. How do distinguish real interesting visualization from spurious ones that are the results of random fluctuation in the sample?

The simplest and safest method to avoid the problem is to test every visualization recommendation on an independent sample not used during the exploration process that led to this recommendation. While easy and safe, this method is clearly not practical for a process that explores many possible visualizations. Data is limited, and we cannot set aside a holdout sample for each possible test. The process needs access to as much data as possible in the exploration process in order to discover all interesting insights. Can we control the generalization error when computing a number of visualizations based on one set of data?

Assume that in our exploration of possible interesting visualizations we tried $\ell$ different visualization patterns, and we computed for each of these patterns a bound $h_i$, $i=1,\dots,\ell$, on the probability that the corresponding observation in the sample \dataset{} does not generalize to the distribution with respect to the entire global sample space $\Omega$.
It is tempting to conclude that the probability that any of the $\ell$ visualizations does not generalized is bounded by $\sum_{i=1}^\ell h_i$. 
Unfortunately, this probability is actually much larger when the choice of the tested visualization depends of the outcome of prior tests. 

This phenomenon is often referred to as Freedman's paradox~\cite{} and the only known practical approach to correct for it is to sum the error probability of all possible tests, not only the tests actually executed~\footnote{Theoretical methods, such as differential privacy~\cite{dwork2015reusable} claim to offer an alternative method to address this issue. 
 In practice however, the signal is lost in the added randomization before it becomes practical.}. Note that  standard statistical techniques for controlling the Family-Wise- Error-Rate (FWER) or the False Discovery Rate (FDR) require that the collection of tests is fixed independent of the data and therefore do not apply to the adaptive exploration scenario.
% While related they provide substantially different guarantees; techniques for the family-wise error-rate (FWER) control the probability of making one or more false discoveries, whereas techniques for the False Discovery Rate control the expected proportion of false "discoveries" to true discoveries. 
% In this paper, we focus on FWER as it better fits the visualization settings. Users want recommendation for statistical valid visualizations, not to a set of visualizations that include some false discoveries.  

% Given the definition of the generalization error of the previous section, controlling FWER using the Bonferroni correction is easy. 
% Assume that in our exploration of possible interesting visualizations we tried $\ell$ different visualizations patterns.
% We can then bound the overall FWER of the recommended visualization as  

% \tim{Lorenzo, please fill me in. }

% \subsection{Discussion}
% The advantage of the previous method over alternative approaches such as having a hold-out is, that it  can immediately provide the user confidence in that the recommended visualization also holds over the underlying distribution ${\cal F}_\mathcal{D}$. 
% Also in contrast to validation using a holdout dataset, the power to actually find an interesting visualization is higher \cite{something}, which is especially important for smaller datasets.

% However, based on the FWER control,  
% The following section gives a rigorous yet practical and efficient method to circumvent this problem.

In the visualization setting, we could decide a-priory that we only consider visualizations of a particular set of patterns, say conditioning on no more than $k$ features. Such restriction defines a bound on the total size of the search space, say $M$. If we now explore the search space and recommend visualization that pass the individual visualization test with confidence level $\leq \alpha/M$ we are guaranteed that the probability that any of our recommendations does not generalize is bounded by $\alpha$. As we show in the experiments section this method is only effective for relatively small search space. Next, we present a novel technique that is significantly more powerful in large and more complex search spaces.
\section{Statistical Guarantees Via Uniform Convergence Bounds}
\label{sec:VC}
We now present a powerful alternative approach, providing strong and practical statistical guarantees through a novel application of VC-dimension theory. 

VC-dimension is usually considered a highly theoretical concept, with limited practical applications, mostly because of the difficulty in estimating the value of the VC-dimension of interesting learning concept class.

Surprisingly, we develop a simple and effective method to compute the VC-dimension of a class of predicate queries which are used to generate the visualizations according to Definition~\ref{def:visualization}, which leads to a practical and efficient solution to our problem. We start with a brief overview of the VC theory and its application to sample complexity. We then discuss its specific application to our visualization problem, emphasizing a simple, efficient, and easy to compute reduction of the theory to practical applications.

\subsection{VC dimension}
\label{sec:vcapproach}
The Vapnik-Chernovenkis (VC) dimension is a measure of the
complexity or expressiveness of a family of indicator functions (or equivalently
a family of subsets)~\cite{vapnik2015uniform}. 
%We outline here some basic
% definitions and results and refer the reader to the works of Alon and
% Spencer~\citeyearpar[Sec.~14.4]{AlonS08}, Devroye et
% al.~\citeyearpar{DevroyeGL96} and Vapnik~\citeyearpar{Vapnik99} for more details
% on VC-dimension. See Sec.~\ref{sec:prevwork} for applications of VC-dimension in 
% computer science.
Formally, VC-dimension is defined on {\em range spaces}:

\begin{definition}\label{defn:rangespace}
 A {\em range space} is a pair $(X,R)$ where $X$ is a (finite or infinite) set
 and $R$ is a (finite or infinite) family of subsets of $X$. The members of $X$
 are called {\em points} and those of $R$ are called {\em ranges}.
 \end{definition}
Note that both $X$ and $R$ can be infinite. Consider now a  projection of the ranges into a finite set of points $A$:
\begin{definition}\label{defn:proj}
Let $(X,R)$ be a range space and let $A\subset X$ be a finite set of points in $X$. 
\begin{enumerate}
    \item 
The {\em projection} of $R$ on
$A$ is defined as $$P_R(A)=\{r\cap A ~:~ r\in R\}.$$
% \end{definition}

% The VC-dimension is defined with respect to \emph{shattered} sets:

% \begin{definition}\label{defn:shatter}
% Let $(X,R)$ be a range space and $A\subset X$. 
\item If $P_R(A)=2^{|A|}$, then $A$ is said
to be {\em shattered by $R$}.
\end{enumerate}
\end{definition}

\def\VC{\mathsf{VC}}
The VC-dimension of a range space is the cardinality of the largest set
shattered by the space:
\begin{definition}\label{defn:VCdim}
  Let $(X,R)$ be a range space. The
  {\em VC-dimension} of $(X,R)$, 
  denoted $\VC(X,R)$ 
  is the maximum cardinality of
  a shattered subset of $X$. If there are arbitrary large shattered subsets,
  then $\VC(X,R)=\infty$.
\end{definition}

Note that a range space $(X,R)$ with an arbitrarily large (or infinite) set of points $X$ and
an arbitrary large family of ranges $R$ can have bounded  VC-dimension (see section~\ref{ConcreteVC}). 

A simple example is the family of closed intervals in $\mathbb{R}$, where $X = \mathbb{R}$, and $R$ corresponds to the (infinite) set of all possible closed intervals intervals $[a,b]$, such that  $\forall a,b \in \mathbb{R}$ we have $0\leq a\leq b\leq 1$). Let $A=\{x,y,z\}$ be \emph{any} subset of $\mathbb{R}$ such that $x\leq y\leq z$. No interval in $R$ can define the subset $\{x,z\}$ so the VC-dimension of this range space is $< 3$.  This observation is generalized in the well known result:
\begin{lemma}\label{lem:intervalunions}
   The VC-Dimension of the union of $d$ closed intervals in $\mathbb{R}$  equals $2d$.
 \end{lemma}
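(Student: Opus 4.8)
The plan is to establish the two matching bounds $\VC(X,R) \geq 2d$ and $\VC(X,R) \leq 2d$ separately, where throughout $X=\mathbb{R}$ and $R$ is the family of all subsets expressible as a union of at most $d$ closed intervals. The entire argument rests on one elementary observation about points on a line: if we fix an ordered finite point set and select a subset of it, the selected points break into maximal \emph{runs} of consecutive (in the ordering) selected points, and a union of $d$ closed intervals realizes exactly those subsets whose number of runs is at most $d$. Indeed, each run lies in a gap bounded by unselected points (or by $\pm\infty$), so a single closed interval can be drawn to contain precisely that run and none of its neighboring unselected points; conversely, any closed interval containing two selected points also contains every point lying between them, so one interval can cover at most one run without sweeping in an unselected point. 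Hence realizability of a subset by $d$ intervals is equivalent to the subset having at most $d$ runs.

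For the lower bound I would exhibit a shattered set of size $2d$. Take any $A=\{x_1,\ldots,x_{2d}\}$ with $x_1 < x_2 < \cdots < x_{2d}$, and let $S\subseteq A$ be arbitrary. I claim $S$ has at most $d$ runs: if $S$ consists of $r$ maximal runs, then consecutive runs are separated by at least one unselected point, contributing at least $r-1$ unselected points, while the runs themselves contribute at least $r$ selected points, so $2d \geq r + (r-1) = 2r-1$, giving $r \leq d + \tfrac12$ and hence the integer $r \leq d$. By the observation above $S$ is realizable by $d$ intervals, so every subset of $A$ belongs to $P_R(A)$ and $A$ is shattered; therefore $\VC(X,R) \geq 2d$.

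For the upper bound I would show that no set of $2d+1$ points is shattered. Given $A=\{x_1,\ldots,x_{2d+1}\}$ with $x_1 < \cdots < x_{2d+1}$, consider the alternating subset $S=\{x_1,x_3,\ldots,x_{2d+1}\}$ of the odd-indexed points, which has $d+1$ elements. Between any two of these selected points lies an unselected even-indexed point, so $S$ consists of $d+1$ singleton runs. By the observation, realizing $S$ would require at least $d+1$ intervals, exceeding the budget of $d$; hence $S \notin P_R(A)$ and $A$ is not shattered. Consequently $\VC(X,R) \leq 2d$, and combining the two bounds yields the claimed equality $\VC(X,R) = 2d$.

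The step I expect to be the crux is the runs observation, and in particular its converse half: arguing that a single closed interval containing two selected points must contain everything between them, so intervals cannot ``skip over'' an unselected point to merge two runs. Once this equivalence between realizability and the run count is pinned down, both the lower and upper bounds reduce to the simple counting inequalities above.
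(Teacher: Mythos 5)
Your proposal is correct: the equivalence between realizability by $d$ closed intervals and the selected subset having at most $d$ maximal runs is sound in both directions, and the two counting arguments (any subset of $2d$ ordered points has at most $d$ runs, since $r$ runs force at least $2r-1$ points; the alternating $(d+1)$-run subset of $2d+1$ points needs $d+1$ intervals) give the matching lower and upper bounds. Note that the paper states Lemma~\ref{lem:intervalunions} without proof, presenting it as a well-known result whose only in-text justification is the $d=1$ special case of your alternating-set argument (the unrealizable subset $\{x,z\}$ of three ordered points), so your write-up is precisely the standard argument the paper implicitly relies on, including where the lemma serves as the base case of the induction in Appendix~\ref{app:vcproof}.
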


VC-dimension, allows to characterize the sample complexity of a learning problem, that is it allows to obtain a tradeoff between the number of sample points being observed by a learning algorithm and the performances achievable by the algorithm itself. 

Consider a range space $(X,R)$, and a fixed range $r\in R$. If we sample uniformly at random a set $S\subset X$ of size $m := |S|$ we know that the fraction $\frac{|S\cap r|}{|S|}$
rapidly converges to the frequency of elements of $r$ in $X$. Furthermore, there are standard bounds (Chernoff, Hoeffding~\cite{}) for evaluating the quality of this approximation. The question becomes much harder when we want to estimate simultaneously the sizes or frequencies of all ranges in $R$ using one sample of $m$ elements.
A finite VC-dimension implies an explicit upper bound on the number of random samples needed to achieve that within pre-defined error bounds (the \emph{uniform convergence property}). 

For a formal definition we need to distinguish between finite $X$, where we case estimate the sizes $r$, and infinite $X$, where we estimate $Pr(r)$, the frequency of $r$ in a uniform distribution over $X$.
\begin{definition}[Absolute approximation]
Let $({X},R)$ be a range space and let $0\leq \epsilon \leq 1$.  A subset $S\subset X$ is an absolute $\epsilon$-approximation for ${X}$ iff for all $r\in R$ we have that for finite $S\subseteq X$,
\begin{equation}
\left| \frac{|r|}{|X|} - \frac{|S\cap r|}{|S|}\right| \leq \epsilon.
\end{equation}
\end{definition}

In~\cite{har2011relative} show an interesting connection between the VC dimension of a range space $({X},R)$ and the number of samples which are necessaries in order to obtain absolute $\epsilon$-approximations of ${X}$ itself 

\begin{theorem}[Sample complexity~\cite{har2011relative}]\label{thm:vcsamplec}
Let $(X,R)$ be a range-space of VC-dimension at most $d$, and let and $0<\epsilon,\delta<1$. Then, there exists an absolute positive constant $c$ such that any random subset $S\subseteq X$ of cardinality 
\begin{equation}
|S| \geq \frac{c}{\epsilon^2}\left(d + \log_2 \delta^{-1}\right)
\end{equation}
is an $\epsilon$-approximation for $X$ with probability at least $1-\delta$.
\end{theorem}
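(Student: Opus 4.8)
The plan is to follow the classical uniform-convergence argument for VC classes: reduce the possibly infinite range family $R$ to a finite combinatorial object via the growth function, and then apply a union bound over this object combined with an exponential tail estimate. Writing $m := |S|$, the quantity I want to control is
\[
\PrD{S}{\exists\, r \in R : \left| \frac{|r|}{|X|} - \frac{|S\cap r|}{|S|}\right| > \epsilon},
\]
and the goal is to show this is at most $\delta$ once $m$ meets the stated bound (for infinite $X$, the ratio $|r|/|X|$ is replaced by $\mathbb{P}(r)$, but the argument is identical because both cases arise from sampling a distribution).

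First I would invoke the Sauer-Shelah lemma: since the VC-dimension is at most $d$, the number of distinct subsets that $R$ induces on any finite set of $M$ points, the \emph{growth function} $\Pi_R(M) := \max_{|A| = M} |P_R(A)|$, satisfies $\Pi_R(M) \leq \sum_{i=0}^d \binom{M}{i} \leq (eM/d)^d$, which is polynomial in $M$. This is the crucial step that converts the infinite family $R$ into something of only polynomial size on any fixed sample, and it is precisely where the finiteness of the VC-dimension enters.

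Next I would apply the \emph{symmetrization} (double-sampling) trick. Rather than comparing the empirical frequency on $S$ directly to the truth, I would introduce an independent ``ghost'' sample $S'$ of the same size $m$ and argue that the event of a large deviation between $S$ and the true frequency is, up to a constant factor, dominated by the probability of a large \emph{discrepancy} between the two empirical frequencies on $S$ and $S'$. The payoff is that this two-sample event depends only on how ranges behave on the $2m$ sampled points, so by the growth-function bound only $\Pi_R(2m) \leq (2em/d)^d$ distinct ranges need be considered. For each fixed range behavior on those $2m$ points I would view the split into $S$ and $S'$ as a uniformly random partition and bound the discrepancy by a Hoeffding/Chernoff-type estimate of order $e^{-c'm\epsilon^2}$. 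A union bound over the $\Pi_R(2m)$ behaviors then gives an overall failure probability of roughly $(2em/d)^d\, e^{-c'm\epsilon^2}$; setting this to be at most $\delta$ and solving for $m$ produces the additive $d$ term from the polynomial factor and the $\log_2 \delta^{-1}$ term from the target, yielding $m = O\!\left(\epsilon^{-2}(d + \log_2 \delta^{-1})\right)$ as claimed.

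I expect the main obstacle to be making the symmetrization step fully rigorous, in particular justifying that the deviation-from-truth event is dominated by the two-sample discrepancy event. This requires a careful conditioning argument on $S$ together with the mild assumption that $m$ is large enough that $m\epsilon^2$ exceeds a fixed constant, so that the ghost sample $S'$ concentrates near the truth with constant probability. Once this reduction is established, the Sauer-Shelah bound, the per-range tail estimate, and the final optimization over $m$ are routine.
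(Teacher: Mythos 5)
The paper does not actually prove this theorem: it is imported verbatim by citation from Har-Peled and Sharir \cite{har2011relative}, whose statement in turn rests on the sharp $\epsilon$-sample bounds of Talagrand and of Li, Long, and Srinivasan. So the comparison here is against the known literature proof rather than anything in the paper. Your outline --- symmetrization with a ghost sample, Sauer--Shelah to bound the growth function, a Hoeffding-type bound on the random partition, and a union bound over the at most $\Pi_R(2m) \leq (2em/d)^d$ induced behaviors --- is the correct classical Vapnik--Chervonenkis argument, and the symmetrization step you single out as the main obstacle is in fact routine and standard. The genuine gap is in the step you dismiss as routine: the final optimization over $m$.

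Concretely, the union bound leaves you needing $c'm\epsilon^2 \geq d\ln(2em/d) + \ln(1/\delta)$. If you try $m = C\epsilon^{-2}\left(d + \ln(1/\delta)\right)$ as the theorem requires, the term $d\ln(2em/d)$ is at least of order $d\ln(1/\epsilon^2) = 2d\ln(1/\epsilon)$, which exceeds $c'Cd$ for any fixed constant $C$ once $\epsilon$ is small. So the polynomial factor does \emph{not} produce merely ``the additive $d$ term,'' as you claim; it produces $d\log(1/\epsilon)$, and your argument therefore only proves the weaker bound $m = O\left(\epsilon^{-2}\left(d\log(1/\epsilon) + \log(1/\delta)\right)\right)$. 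Eliminating the $\log(1/\epsilon)$ factor to obtain the statement as written is a substantial result in its own right: it requires chaining (in Talagrand's treatment) or the refined partitioning/packing argument of Li--Long--Srinivasan, not a sharper union bound. To repair your proposal you would either have to invoke one of those results at the last step, or weaken the theorem's cardinality requirement to include the logarithmic factor --- which would still suffice for the paper's downstream uses (Lemma 2 and its corollary) up to constants, but would not be the theorem as stated.
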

The constant $c$ was shown experimentally~\cite{c2loeffler} to be at most $0.5$\footnote{Indeed, we use $c=0.5$ in our experimental evaluation.}.

% \subsection{Motivation}
% The crucial difference between our proposed approach and standard statistical test is that it is based on the estimation of the quantities of interest, rather than the analysis of hypothesis corresponding to the statistical relevance of an observed phenomenon.

% Let us consider the values $E[f_p(X)]$, which correspond to the expected values of the functions being considered according to the unknown distribution $\mathcal{D}$. We refer to these value as the ``\emph{true}'' values of the function.

% Due to limited size of the sample dataset being considered in $X$, the values $f_p(X)$ will be unbiased approximations of $E[f_p(X)]$. In order to ascertain the whether an observed phenomenon is actually statistically relevant, we provide bounds on the quality of the approximation provided by the $f_p(X)$. 

% When considering multiple functions it is however necessary to control simultaneously the quality of all such approximation. While this can be accomplished using the well-known union bound, the strength of the obtained guarantees ends up deteriorating as the number of functions being considered increases. 
% \todo{some comment on Indyk}.

\subsection{Statistically Valid Visualization through VC dimension}\label{ConcreteVC}

To apply the uniform convergence method via VC dimension to the visualization setup, we consider a range space $(\Omega,R)$, where $\Omega$ is global domain, and $R$ consists of all the possible subsets of $X$ that can be selected by visualizations predicates. That is, $R$ includes all the subsets that correspond to \emph{any} bar for \emph{any} visualization which can be selected using the appropriate predicate filter. Given a choice of possible allowed predicates, we refer to the associate set of ranges as the ``\emph{query range space}'' and we denote it $Q$.

The VC dimension of a query range class is a function of the type of select operators (i.e., $>,<,\geq,\leq,=,\neq$) and
the number of (non-redundant) operators allowed on each feature in the construction of the allowed predicates. Note that depending on the \emph{domain} of the selected features and the complexity according to which the predicate filters can be constructed, the number of possible predicates may be infinite. In order to use the VC-approach it is however sufficient to efficiently compute a finite \emph{upper bound} of the VC-dimension of the set of \emph{allowed predicates}.
We discuss an efficient method for bounding the VC-dimension of a query range space in the next subsection (See Subsection~\ref{ConcreteVC}).

In order to deploy the general results from the previous section, we have to verify that the sample \dataset{} provides an $\epsilon$-approximation for the values $p_{\mathcal{V}}$ for all the visualizations considered in the query range space $Q$.

To this end, it is useful to introduce the following, well known, property of VC dimension:

\begin{fact}
Let $(X,R)$ be a range space of VC dimension $d$. For any $X'\subseteq X$, the VC-dimension of $(X',R)$ is bounded by $d$.
\end{fact}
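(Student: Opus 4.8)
The plan is to argue directly from the definition of VC-dimension as the largest cardinality of a shattered set, exploiting the fact that whether a finite set $A$ is shattered depends only on $A$ and the range family $R$, and not on the ambient point set. Recall that $A$ is shattered by $R$ precisely when the projection $P_R(A)=\{r\cap A : r\in R\}$ equals $2^{|A|}$, that is, when every subset of $A$ is realized as $r\cap A$ for some $r\in R$.

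First I would observe that any subset $A$ witnessing the VC-dimension of the restricted range space $(X',R)$ must satisfy $A\subseteq X'$. Since $X'\subseteq X$, such an $A$ is also a subset of $X$, and is therefore an admissible candidate for a shattered set of $(X,R)$. The point to be careful about is simply the convention: in $(X',R)$ we keep the same range family $R$, with ranges implicitly intersected against the smaller ground set.

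Next I would note that the projection $P_R(A)$ is computed identically in both range spaces: it is the collection of sets $r\cap A$ as $r$ ranges over the same family $R$, and this expression does not reference $X$ or $X'$ at all. Consequently, if $A\subseteq X'$ is shattered in $(X',R)$, i.e. $P_R(A)=2^{|A|}$, then that very same identity certifies that $A$ is shattered in $(X,R)$ as well.

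Combining these observations, every set shattered in $(X',R)$ is also shattered in $(X,R)$, so the supremum of shattered cardinalities can only shrink (or stay equal) when passing to the restricted space, giving $\VC(X',R)\le \VC(X,R)=d$. There is no substantive obstacle here beyond pinning down the convention for $(X',R)$; once that is fixed, the result is pure monotonicity of the shattering condition under shrinking the point set, and the bound follows immediately.
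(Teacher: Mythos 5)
Your proof is correct: the paper states this fact without proof, citing it as well known, and your argument --- that whether $A$ is shattered depends only on $A$ and $R$ through the projection $P_R(A)$, so every set shattered in $(X',R)$ is shattered in $(X,R)$ and the VC-dimension is monotone under shrinking the ground set --- is precisely the standard argument that fills that gap. Your attention to the convention for $(X',R)$ resolves the only subtlety, since for $A\subseteq X'$ one has $(r\cap X')\cap A = r\cap A$, so the same range realizes each subset of $A$ in both spaces.
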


Using this fact in conjunction with Theorem~\ref{thm:vcsamplec} we have:

\begin{lemma}\label{lem:epsilonbar}
Let $(\Omega, Q)$ denote the range space of the queries being considered with VC dimension  bounded by $d$, and let $\delta\in (0,1)$. Let \dataset{} be a random subset of $\Omega$. Then there exists a constant $c$, such that with probability at least $1-\delta$ for any filter $F$ defined in  $Q$ we have that the subset $\mdataset|F$ is an $\epsilon_{F}$-approximation of $\Omega|F$ with:
\begin{equation*}
    \epsilon_F \geq \sqrt{\frac{c}{|D|F|} \left( d+\log_2 \delta^{-1}\right)}.
\end{equation*}
\end{lemma}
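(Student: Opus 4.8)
The plan is to reduce the claim, for a given filter $F$, to a direct application of the sample-complexity bound of Theorem~\ref{thm:vcsamplec} on the \emph{conditioned} range space $(\Omega|F, Q)$ rather than on the full range space $(\Omega,Q)$. The reason is that the quantities we ultimately care about, $p_{\vis{1}}(x_i)$ and $\hat p_{\vis{1}}(x_i)$, are the \emph{relative} frequencies of the event $X=x_i$ measured \emph{inside} $\Omega|F$ and $\mdataset|F$ respectively. An absolute $\epsilon_F$-approximation of the range space $(\Omega|F,Q)$ is precisely the statement that $|p_{\vis{1}}(x_i)-\hat p_{\vis{1}}(x_i)|\le \epsilon_F$ holds simultaneously for every bar $x_i$ (taking the range $r=\{X=x_i\}\in Q$). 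So the whole lemma follows once I establish that $\mdataset|F$ is an absolute $\epsilon_F$-approximation of $\Omega|F$ with probability at least $1-\delta$.

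To set up Theorem~\ref{thm:vcsamplec} on $(\Omega|F,Q)$ I would verify its two hypotheses. First, by Fact~\ref{fact:uniformsubsample}, since $\mdataset$ is a uniform random sample of $\Omega$, the conditioned subset $\mdataset|F$ is itself a uniform random sample of $\Omega|F$ of size exactly $|\mdataset|F|$; this is the sample to which the theorem will be applied. Second, since $\Omega|F\subseteq\Omega$, the restriction property of VC dimension stated immediately above (shrinking the ground set while keeping the same range family cannot increase the VC dimension) guarantees that $(\Omega|F,Q)$ has VC dimension at most $d$. Both ingredients are then in place.

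It remains to invert the sample-complexity inequality. Theorem~\ref{thm:vcsamplec} states that a uniform sample of size $|\mdataset|F|\ge \frac{c}{\epsilon_F^2}\bigl(d+\log_2\delta^{-1}\bigr)$ is an $\epsilon_F$-approximation with probability at least $1-\delta$. Solving this for $\epsilon_F$ yields exactly $\epsilon_F\ge\sqrt{\frac{c}{|\mdataset|F|}\bigl(d+\log_2\delta^{-1}\bigr)}$, the stated threshold: for any $\epsilon_F$ at least this large the guarantee holds with probability $1-\delta$. Combining this with the reduction of the first paragraph completes the argument.

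The step I expect to require the most care is not the algebra but the VC-dimension bookkeeping: I must make sure conditioning on $F$ genuinely preserves the $\le d$ bound (exactly what the restriction property buys us) and that each bar $\{X=x_i\}$ is a legitimate range of the conditioned space, so that the absolute-approximation guarantee transfers to every $|p_{\vis{1}}(x_i)-\hat p_{\vis{1}}(x_i)|$. A secondary subtlety worth flagging is \emph{why} we condition before invoking the theorem rather than deriving the conditional bound from a single $\epsilon$-approximation of $(\Omega,Q)$: writing $p_{\vis{1}}(x_i)$ as a ratio of two unconditional frequencies and approximating numerator and denominator separately would incur a $1/\gamma_{\vis{1}}$ blow-up in the error, whereas applying Theorem~\ref{thm:vcsamplec} directly to $(\Omega|F,Q)$ with the smaller sample $\mdataset|F$ yields the sharper $1/\sqrt{\gamma_{\vis{1}}}$ scaling implicit in the stated bound, since $|\mdataset|F|\approx\gamma_{\vis{1}}\, n$.
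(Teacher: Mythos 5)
Your proposal follows essentially the same route as the paper's proof: invoke Fact~\ref{fact:uniformsubsample} to see that $\mdataset|F$ is a uniform random sample of $\Omega|F$, use the restriction property to bound the VC dimension of $(\Omega|F,Q)$ by $d$, and then invert the sample-complexity bound of Theorem~\ref{thm:vcsamplec}. If anything, your write-up is more careful than the paper's, which in its displayed inequality writes $|\mdataset|$ where it should be $|\mdataset|F|$ and omits the square on $\bar{\epsilon}$, both of which you get right.
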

\begin{proof}
Fact~\ref{fact:uniformsubsample} ensures that given the dataset \dataset, for any choice of a predicate $F$ we have that $\mdataset{}|F$ is a random sample of $\Omega|F$. 
Therefore regardless of the specific choice of the predicate, we have that the VC dimension of the reduced range $(\Omega|F,Q)$  is bounded by $d$. From Theorem~\ref{thm:vcsamplec} we have that if:
\begin{equation}
|\mdataset{}| \geq \frac{c}{\bar{\epsilon}}\left(d+\log_2 \delta^{-1}\right) 
\end{equation}
then $\mdataset{}|F$ is an $\bar{\epsilon}$ approximation for the respective set $\Omega|F$.
\end{proof}

Lemma~\ref{lem:epsilonbar} provides us an efficient tool to evaluate the quality of our estimations $ \hat{p}_{\mathcal{V}}$ of the actual ground truth values $ p_{\mathcal{V}}$ for any choice of predicate associated with the visualization. 
In particular, Lemma~\ref{lem:epsilonbar} verifies that the quality decreases gradually the more \emph{selective} the predicate associated with a visualization is. That is, the smaller the cardinality of $|\mdataset{}|F|$, the higher the \emph{uncertainty} $\bar{epsilon}$ of the estimate is.

\begin{corollary}
Let \dataset{} be a random sample from $\Omega$, and let $Q$ be a query range space with VC dimension bounded from above by $d$. For any visualization with  $\mathcal{V} \in Q$ and for any value $\delta\in(0,1)$ we have that
\begin{equation}
\Pr\{\left| p_{\mathcal{V}}(X=x_i) - \hat{p}_{\mathcal{V}(X=X_i)}\right| \geq  \bar{\epsilon}\}< \delta,
\end{equation}
where
\begin{equation}
\bar{\epsilon} \geq \frac{c}{|D|F|}\left(d+\log_2 \delta^-1\right),
\end{equation}
$F$ denotes the predicate associated with the visualization $\mathcal{V}$ and and $X$ denotes the group-by feature being considered. 
\end{corollary}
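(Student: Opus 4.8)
The plan is to obtain the corollary as an immediate consequence of Lemma~\ref{lem:epsilonbar}, by unpacking the definition of an absolute $\epsilon$-approximation for the single range that selects the bar $\{X = x_i\}$. First I would instantiate Lemma~\ref{lem:epsilonbar} at the predicate $F$ associated with the visualization $\mathcal{V}$: over the random draw of $\mdataset$, with probability at least $1-\delta$ the subset $\mdataset|F$ is an $\epsilon_F$-approximation of $\Omega|F$ for the range space $(\Omega|F, Q)$, where $\epsilon_F$ is the quantity displayed in the lemma. I would call the complement of this event the \emph{failure event}; by the lemma it has probability strictly less than $\delta$.

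The key step is to recognize that the bar $\{X = x_i\}$ corresponds to a selection predicate lying in the query range space $Q$, so that, writing $r \in Q$ for the range that selects the records with $X = x_i$, the records satisfying both $F$ and $X = x_i$ form the set $r \cap (\Omega|F)$ in the population and $r \cap (\mdataset|F)$ in the sample. Because the guarantee of Lemma~\ref{lem:epsilonbar} holds \emph{simultaneously} for every range of $Q$, it applies in particular to this $r$. By the definitions of $p_{\mathcal{V}}$ and $\hat{p}_{\mathcal{V}}$ (cf.~\eqref{eq:empiricalp} and the surrounding text), $p_{\mathcal{V}}(X = x_i)$ is exactly the relative frequency of $r$ within $\Omega|F$ and $\hat{p}_{\mathcal{V}}(X = x_i)$ is the relative frequency of $r$ within $\mdataset|F$; these are precisely the two quantities compared in the definition of an absolute approximation. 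Hence, on the good (non-failure) event, the approximation inequality for $r$ reads $|p_{\mathcal{V}}(X = x_i) - \hat{p}_{\mathcal{V}}(X = x_i)| \leq \epsilon_F$.

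To finish, I would observe that whenever $\bar{\epsilon} \geq \epsilon_F$ the event $\{|p_{\mathcal{V}}(X = x_i) - \hat{p}_{\mathcal{V}}(X = x_i)| \geq \bar{\epsilon}\}$ is contained in the failure event, and therefore has probability strictly less than $\delta$, which is exactly the claimed bound. Note that Fact~\ref{fact:uniformsubsample} (ensuring $\mdataset|F$ is a uniform sample of $\Omega|F$) is what legitimizes applying the sample-complexity machinery to the conditioned space in the first place, and it is already folded into Lemma~\ref{lem:epsilonbar}.

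This proof has no genuine obstacle: it is pure bookkeeping, as the substantive work was done in Theorem~\ref{thm:vcsamplec} and Lemma~\ref{lem:epsilonbar}. The two points I would nonetheless be careful about are (i) verifying that each individual bar is selectable by a range of $Q$, since this is precisely what lets the uniform, all-ranges-at-once guarantee of Lemma~\ref{lem:epsilonbar} specialize to a \emph{single} bar without any additional union bound; and (ii) reconciling the stated expression for $\bar{\epsilon}$ with the one inherited from Lemma~\ref{lem:epsilonbar} — the threshold should carry the square root from the sample-complexity bound, i.e. $\bar{\epsilon} \geq \sqrt{\tfrac{c}{|\mdataset|F|}\left(d + \log_2 \delta^{-1}\right)}$, rather than the root-free form as currently displayed.
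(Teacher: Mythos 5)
Your proof is correct and is exactly the argument the paper intends: the corollary is stated without a separate proof as an immediate specialization of Lemma~\ref{lem:epsilonbar}, unpacking the absolute $\epsilon$-approximation guarantee (which holds simultaneously over all ranges of $Q$, with Fact~\ref{fact:uniformsubsample} already absorbed into the lemma) for the single range that selects the bar $\{X = x_i\}$ within $\Omega|F$ and $\mathcal{D}|F$. Your point (ii) is also a correct catch: the threshold displayed in the corollary is missing the square root and should read $\bar{\epsilon} \geq \sqrt{\tfrac{c}{|\mathcal{D}|F|}\left(d+\log_2 \delta^{-1}\right)}$, consistent with Lemma~\ref{lem:epsilonbar}, so this is a typo in the paper's statement rather than a gap in your argument.
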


\subsection{The VizRec recommendation validation criteria}

Consider now a given reference visualization $\vis{1}$ and a candidate recommendation $\vis{2}$, both using $X$ as the group-by feature, were the the domain of $X$ has $K$ values (i.e., $dom(X)=\{x_1,\ldots,x_K\}$). From Lemma~\ref{lem:epsilonbar}, we have that with probability $1-\delta$ the empirical estimates of the normalized columns are accurate within $\bar{\epsilon}$, which depends on the size of the subset of the sample dataset \dataset{} used for the reconstruction of $\vis{2}$.

As argued in Section~\ref{sec:statssafvis}, we consider a candidate visualization worth of being recommended if it represents a different statistical behavior of the group-by feature with respect to the reference query. 

Our VizRec strategy operated by comparing the values $\hat{p}_{\vis{1}}(x_i)$ and $\hat{p}_{\vis{2}}{x_i}$ for all values $x_i$ in the domain of the chosen group-by feature. 
Let $\bar{\epsilon}_1$ (resp., $\bar{\epsilon}_2$) denote the uncertainty such that with probability at least $1-\delta$ we have $\|p_{\vis{1}}(x_i) - \hat{p}_{\vis{1}}(x_i)\|\leq \bar{\epsilon}_1$ and $\|p_{\vis{2}}(x_i) - \hat{p}_{\vis{2}}(x_i)\|\leq \bar{\epsilon}_2$ accoding to Lemma~\ref{lem:epsilonbar}. 
If it is the case that $|\hat{p}_{\vis{1}}(x_i)-\hat{p}_{\vis{2}}(x_i)|> \bar{\epsilon_1}+ \bar{\epsilon_2}$ then we can conclude that with probability at least $1-\delta$ we have $p_{\vis{1}}(x_i)\neq p_{\vis{2}}(x_i)$. 

That is VizRec recognizes as \emph{statistically different} (and hence, interesting) only pairs of visualizations for which the most different pair of corresponding columns differs by more than the error in the estimations from the sample. If that is the case, it is possible to guarantee that $\vis{1}$ and $\vis{2}$ are indeed according to the Chebyschev measure.
%This control allows to infer whether two visualization that \emph{appear} to be different when evaluated on the sample dataset \dataset{} are \emph{actually} different with respect to the global sample space $\Omega$. 
Due to the uniform convergence bound ensured by the application of  VC dimension, we can have that the probabilistic guarantees of this control hold \emph{simultaneously} for all possible pairs of reference and candidate recommendation visualizations. The advantage of this approach compared to the use of multiple Chernoff bounds is discussed in Appendix~\ref{app:vcchernoff}. Further, our VC dimension approach is \emph{agnostic} to the adaptive nature of the testing as it accounts \emph{preventively} for \emph{all} possible evaluations of pairs of visualizations. Threrefore, we have:

\begin{theorem}\label{thm:uniformFWERcontrol}
For any given $\delta \in (0,1)$, VizRec ensures FWER control at level $\delta$ while offering visual recommendations.
\end{theorem}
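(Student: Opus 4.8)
The plan is to show that the single uniform-convergence event guaranteed by Lemma~\ref{lem:epsilonbar} is already strong enough to force \emph{every} recommendation VizRec emits to be correct, so that a false discovery can occur only when that event fails, which happens with probability at most $\delta$. First I would fix the meaning of the family-wise error in this setting: a false discovery is the event that VizRec flags some candidate $\vis{2}$ as interesting relative to some reference $\vis{1}$ (both drawn from the query range space $Q$) even though $p_{\vis{1}}(x_i) = p_{\vis{2}}(x_i)$ for every value $x_i$ in the group-by domain, i.e.\ the two visualizations are indistinguishable under the true distribution $\mathcal{F}_{\mathcal{D}}$. The FWER is then the probability that at least one such false discovery occurs over the entire, possibly adaptive, exploration.

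Next I would introduce the ``good event'' $G$ on which, simultaneously for every predicate $F$ admissible in $Q$ and every value $x_i$, the empirical conditional frequency satisfies $|p_{\mathcal{V}}(x_i) - \hat{p}_{\mathcal{V}}(x_i)| \leq \bar{\epsilon}_{\mathcal{V}}$, where $\bar{\epsilon}_{\mathcal{V}}$ is the support-dependent accuracy supplied by Lemma~\ref{lem:epsilonbar}. The crucial structural point, and the reason this argument escapes the Freedman's-paradox obstruction described in Section~\ref{sec:multicomparison}, is that $G$ is a \emph{single} event covering all ranges in $Q$ at once: since $(\Omega, Q)$ has VC-dimension at most $d$, Lemma~\ref{lem:epsilonbar} yields $\Pr[G] \geq 1-\delta$ regardless of which visualizations are actually examined or in what data-dependent order they are chosen. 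This uniform convergence is precisely what replaces the union bound over executed tests that would otherwise degrade as the search space grows.

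Then I would show that on $G$ no false discovery is possible. Suppose VizRec recommends $\vis{2}$ for $\vis{1}$; by the VizRec criterion there is a value $x_i$ with $|\hat{p}_{\vis{1}}(x_i) - \hat{p}_{\vis{2}}(x_i)| > \bar{\epsilon}_1 + \bar{\epsilon}_2$. On $G$ we have both $|p_{\vis{1}}(x_i) - \hat{p}_{\vis{1}}(x_i)| \leq \bar{\epsilon}_1$ and $|p_{\vis{2}}(x_i) - \hat{p}_{\vis{2}}(x_i)| \leq \bar{\epsilon}_2$, so the triangle inequality gives $|p_{\vis{1}}(x_i) - p_{\vis{2}}(x_i)| \geq |\hat{p}_{\vis{1}}(x_i) - \hat{p}_{\vis{2}}(x_i)| - \bar{\epsilon}_1 - \bar{\epsilon}_2 > 0$, whence $p_{\vis{1}}(x_i) \neq p_{\vis{2}}(x_i)$ and the recommendation reflects a genuine difference under $\mathcal{F}_{\mathcal{D}}$. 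Because this holds for every recommendation at once, $G$ implies zero false discoveries, and therefore the FWER is bounded by $\Pr[\overline{G}] \leq \delta$, which is the claim.

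The main obstacle I anticipate is making the simultaneity in the second step fully rigorous. Lemma~\ref{lem:epsilonbar} controls each conditional estimate through the subsample $\mdataset|F$, whose cardinality $|\mdataset|F|$ is itself random, so I would need to argue carefully that ``$\mdataset|F$ is an $\bar{\epsilon}_F$-approximation of $\Omega|F$ for all $F$'' is a statement about a \emph{single} draw of $\mdataset$ from $\Omega$ that holds jointly across all predicates and all group-by values, rather than a per-predicate guarantee with a hidden union bound. Establishing that the conditional estimates $\hat{p}_{\mathcal{V}}(x_i)$, which are ratios of two range-frequencies, are all captured by the one VC event, and that this uniform control is genuinely independent of the adaptive selection rule, is the delicate part; the triangle-inequality step and the final probability bound are then routine.
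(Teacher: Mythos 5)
Your proposal is correct and takes essentially the same route as the paper: the paper justifies Theorem~\ref{thm:uniformFWERcontrol} precisely by invoking the single uniform-convergence event of Lemma~\ref{lem:epsilonbar} over the whole query range space $Q$ (hence agnostic to the adaptive selection of tests) together with the triangle-inequality step showing $|\hat{p}_{\vis{1}}(x_i)-\hat{p}_{\vis{2}}(x_i)|>\bar{\epsilon}_1+\bar{\epsilon}_2$ implies $p_{\vis{1}}(x_i)\neq p_{\vis{2}}(x_i)$ on that event. If anything, you are more careful than the paper, which leaves entirely implicit the subtlety you flag at the end, namely that the conditional estimates are ratios of range frequencies with random denominators $|\mdataset|F|$, handled only via Fact~\ref{fact:uniformsubsample}.
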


\begin{algorithm}
\caption{VizRec: Visual Recommendations with VC dimension}\label{alg:vizrec}
\begin{algorithmic}[1]
\Procedure{VizRec}{}
\Statex \textbf{Input:} Starting visualization $\vis{1}$ , query space $Q$, sample dataset \dataset{}, FWER target control level $\delta\in (0,1)$.
\Statex \textbf{Output:} A set of $Y$ statistically safe recommendations sorted according to decreasing interest.
\State $Y\leftarrow []$ \Comment{Empty list of recommendations}
\State $X\leftarrow$ the group-by feature being considered.
\State $F_{\vis{1}}\leftarrow$ the predicate associated with $\vis{1}$.
\State $\bar{\epsilon}_1\leftarrow \frac{d + \log_2\delta^{-1}}{2|D|F_{\vis{1}}|}$ \Comment{Uncertainty in $\vis{1}$ approx.}
\For{ \textbf{all} $\mathcal{V}'\in Q$}
\State $F_{\mathcal{V}'}\leftarrow$ the predicate associated with $\mathcal{V}'$.
\State $\bar{\epsilon}'\leftarrow \frac{d + \log_2\delta^{-1}}{2|D|F_{\mathcal{V}'}|}$ 
\State $dist \leftarrow  \max_{x_i\in dom(X)}|\hat{p}_{\vis{1}}(x_i)-\hat{p}_{\vis{2}}(x_i)|$
\State $interest \leftarrow dist - \left( \bar{\epsilon}_1+ \bar{\epsilon}_2 \right)$
\If {$dist \geq \bar{\epsilon}_1+ \bar{\epsilon}_2$}
\State $Y.append([\mathcal{V}', interest]$
\EndIf
\State \textbf{or} if stricter criteria with $\epsilon_V$
\If {$dist \geq \max \lbrace \bar{\epsilon}_1+ \bar{\epsilon}_2, \epsilon_V\rbrace$}
\State $Y.append([\mathcal{V}', interest]$
\EndIf
\EndFor
\State \textbf{return} sort $Y$ according to the interest value.
\EndProcedure
\end{algorithmic}
\end{algorithm}

This criteria can be strengthened by imposing a higher threshold of difference between two visualization in order for a candidate visualization to be considered interesting. As an example, in Section 3.2, we discuss the possible use of a threshold $\epsilon_{V}$ denoting visual discernability. When using this, more restrictive constraint, VizRec would accept a candidate visualization as interesting only if 
\begin{equation*}
\max_{ x_i\in dom(X)} |\hat{p}_{\vis{1}}(x_i) -\hat{p}_{\vis{2}}(x_i) |> \max\lbrace \bar{\epsilon}_1+\bar{\epsilon_2}, \epsilon_{\mathcal{V}}\rbrace.
\end{equation*}
We present a simplified psudocode of our VizRec procedure  in Algorithm~\ref{alg:vizrec}.

Our VizRec approach operates as the equivalent of a two-sample test, in the sense that we assume that in general there is uncertainty in the reconstruction of both the reference visualization $\vis{1}$ and of the candidate $\vis{2}$. In some scenarios, it may be possible to assume that the reference visualization is given \emph{as exact}. In such case in order for the candidate $\vis{2}$ to be recommended it would be sufficient that 
\begin{equation*}
\max_{ x_i\in dom(X)} |\hat{p}_{\vis{1}}(x_i) -\hat{p}_{\vis{2}}(x_i) |> \bar{\epsilon_2}.
\end{equation*}

After identifying as set of recommendations whose interest is guaranteed with probability at least $1-\delta$, VizRec ranks them according to the difference between their ``\emph{empirical interest}'' (i.e., $\max_{ x_i\in dom(X)} |\hat{p}_{\vis{1}}(x_i) -\hat{p}_{\vis{2}}(x_i) |$) and the uncertainty of the evaluation of such measure (i.e., $ \bar{\epsilon_1}+ \bar{\epsilon_2}$). While somewhat arbitrary, we chose this heuristic as it allows to emphasize the intrinsic value of visualizations with large support over those with small support.

%may have a higher ``\emph{raw}'' interest, there is an intrinsic value in comparing visualizations with high support, as it is naturally harder to observe large variations in the statistical behavior of the distribution of the group-by features being considered.

\subsection{The VC dimension of the Range Space}
\label{sec:relationvcpredicate}

In order to actually deploy the VC dimension bounds previously discussed it is necessary to bound the VC dimension of the class of queries being considered. While challenging in general, we develop here a simple and effective bound on the VC dimension of the class of queries being considered based on the complexity of the constraints defining the predicates.

As discussed in Section~\ref{sec:model}, we assume that the values of the features can be mapped to real numbers. Hence constraint of values of a certain feature formalized using the operators $\geq, \leq, =$ and $\neq$ correspond to selecting intervals (either open or close) of the possible values of a feature. 
For each feature, the various clauses are \emph{connected} by means of ``\texttt{or}'' operators. We characterize the complexity of such connection by the minimum number of \emph{non-redundant} open and close intervals of the value. In particular we say that a connection of intervals is  \emph{non-redundant} is there is no connection of fewer intervals that selects the same values. 

The VC dimension of a class of queries can then be characterized according to the number of non-redundant constraints applied to the various features. 

\begin{lemma}\label{lem:VCbound}
Let $\mathcal{Q}$ denote the class of query functions such that each query is a conjunction of connections of clauses on the value of distinct features. The VC dimension of $Q$ is:
\begin{equation}
VC\left(Q\right) = \sum_{i=1}^m 2\alpha_i + \beta_i,
\end{equation}
where $\alpha_i$ (resp., $\beta_i$) denotes the maximum number of non-redundant closed (resp., open) intervals of values  corresponding to the connection of constraints regarding the value of the $i$-th feature, for $1\leq i\leq m$.
\end{lemma}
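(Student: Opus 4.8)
The plan is to reduce the multi-feature claim to a one-dimensional computation per feature, and then to argue that the VC dimension is \emph{additive} across features precisely because each clause constrains a \emph{distinct} coordinate. Throughout I regard a query as a generalized axis-parallel box $q=\bigcap_{i=1}^m g_i$, where $g_i$ is the range carved out on feature $i$ by that feature's connection of clauses.

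First I would fix a feature $i$ and analyze the one-dimensional range family $\mathcal{R}_i$ on $\mathbb{R}$ consisting of all unions of $\alpha_i$ bounded closed intervals together with $\beta_i$ unbounded ``open'' intervals (the rays produced by a single $\geq$ or $\leq$ clause). Lemma~\ref{lem:intervalunions} already gives the VC dimension of a union of $d$ closed intervals as $2d$; I would reuse its underlying ``counting of runs'' argument, namely that a set of $t$ collinear points is shattered iff no subset forces more maximal runs than the family can supply. Each bounded interval supplies two free endpoints and each ray supplies one, so $\mathcal{R}_i$ shatters $t$ collinear points iff $t \le 2\alpha_i+\beta_i$, giving $\VC(\mathbb{R},\mathcal{R}_i)=2\alpha_i+\beta_i=:D_i$.

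For the lower bound I would exhibit a shattered set of size $D:=\sum_i D_i$, generalizing the planar ``diamond'' that shatters four points with rectangles. For each feature $i$ I place $D_i$ points whose $i$-th coordinates realize a one-dimensional shattered configuration for $\mathcal{R}_i$, while pinning their coordinates on every other axis $j\neq i$ to a common neutral value. To realize a target subset $S$, the constraint $g_i$ carves out exactly the desired pattern among feature-$i$'s own points (possible since they are one-dimensionally shattered) while simultaneously being widened on its neutral band so as to retain all points belonging to the other clusters. Verifying that both demands can be met at once for \emph{every} $S$ is exactly where the interval structure is essential, in particular the freedom to extend an interval or ray toward $\pm\infty$ to absorb the neutral values. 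For the upper bound I would show no set $A$ of size $D+1$ is shattered, by an extremality/endpoint-charging argument: since every leave-one-out subset $A\setminus\{a\}$ must be realized by some box, some feature $i$ must \emph{separate} $a$, excluding $\pi_i(a)$ while including the projections of all other points of $A$; each such separation consumes a distinct interval endpoint of $g_i$, and feature $i$ offers only $2\alpha_i+\beta_i$ endpoints, so charging the points of $A$ injectively to (feature, endpoint) pairs and summing yields $|A|\le\sum_i(2\alpha_i+\beta_i)=D$.

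I expect the additivity itself to be the main obstacle. A naive bound via the growth function (Sauer--Shelah) or raw parameter counting only gives $\VC(Q)=O(D\log D)$, reflecting the generic logarithmic overhead incurred when intersecting $m$ concept classes; obtaining the \emph{exact} sum requires genuinely exploiting that the $m$ clauses act on \emph{disjoint} features, so that the per-axis inclusion patterns can be set independently. Concretely, the delicate technical core is making the diamond-type construction and the injective charging fully rigorous for \emph{unions} of intervals (rather than single intervals), where a point may be separated at either a bounded endpoint or an unbounded ray and where the neutral-band-retention step must be checked not to destroy the carefully arranged one-dimensional patterns.
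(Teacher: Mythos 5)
Your upper bound is where the proposal breaks. The endpoint-charging argument uses only the leave-one-out subsets $A\setminus\{a\}$, and each such subset is realized by a \emph{different} query; the endpoints ``consumed'' therefore come from fresh queries, not from a fixed pool of $\sum_i(2\alpha_i+\beta_i)$ (feature, endpoint) pairs, so the injection you want cannot be set up. Concretely, already in one dimension with one closed interval and one ray (so $2\alpha_1+\beta_1=3$): take arbitrarily many collinear points $p_1<\dots<p_t$. Every leave-one-out subset is realizable (cover the block left of the excluded point with the ray and the block to its right with the interval), so by your charging scheme $t\leq 3$ would follow for all $t$ --- a contradiction with the scheme, not with the class. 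Leave-one-out realizability is simply too weak a necessary condition: the subsets that actually defeat unions of intervals are the \emph{alternating} patterns ($1010\ldots$), which force one interval per selected run, and your argument never invokes them. Any correct upper bound must use such high-alternation subsets, per feature, and then rule out that the $m$ features can share the load --- which is exactly the part you correctly identify as the crux (beating the generic $O(D\log D)$ from intersecting classes) but do not supply a mechanism for.

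The paper closes this gap differently: it proceeds by induction on the number of features. Assuming a set $X$ of size $\sum_{j\leq i+1}(2\alpha_j+\beta_j)$ were shattered, it takes a maximal subset $X'$ shattered using constraints on the first $i$ features only (at most $\sum_{j\leq i}(2\alpha_j+\beta_j)$ points by the inductive hypothesis), and argues via the conjunctive structure of the queries that the remaining $2\alpha_{i+1}+\beta_{i+1}$ points would then have to be shattered \emph{solely} by the interval constraints on feature $i+1$, contradicting the one-dimensional base case (the interval-union lemma, which is also where the alternating-pattern counting lives). Two side remarks: your lower-bound cluster construction is a genuine addition --- the paper's proof only establishes the upper bound despite the lemma asserting equality --- but it also needs repair, since each realizing range on feature $i$ must additionally contain the neutral value of the other clusters, and this extra containment constraint can consume interval capacity unless the neutral value is placed carefully (e.g., inside a gap of cluster $i$'s shattered configuration); whether full capacity $2\alpha_i+\beta_i$ survives this constraint is precisely what you would have to verify and is not automatic.
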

The proof for Lemma~\ref{lem:VCbound}, presented in detail in Appendix~\ref{app:vcproof}, proceeds by induction of the number of features which can be used in constructing the queries, and builds on known results on the VC dimension of a class of functions constituted by the union of a finite number of closed intervals on $\mathbb{R}$. 

Algorithm~\ref{alg:reduction} (in Appendix~\ref{app:reduction}) outlines a procedure which reduces an input query class $Q$ to an equivalent non-redundant version and computed a bound on its VC dimension.
%Algorithm~\ref{alg:reduction} operates ``\emph{consolidating}'' redundant clauses in and equivalent minimal set. 
%Such consolidation is achieved by first (lines 2-10) determining the minimal open intervals for each feature, and then by merging overlapping closed intervals whenever possible. Note that given a specific choice of a query class, Algorithm~\ref{alg:reduction}, needs to be run just once to bound its VC dimension.

\subsection{Trade-off between query complexity and minimum allowable selectivity}\label{sec:VCselectivity}
%As discussed in Section~\ref{sec:relationvcpredicate}, the VC dimension of the queries range space  depends on the complexity of the queries themselves. Such dependence is synthesized in Lemma~\ref{lem:VCbound}. 

Consider exploring the space of possible recommendations by constructing the filter condition one clause at a time. Note that as the filter condition grows in its complexity (i.e., multiple non-trivial clauses are added) the number of records selected by the predicate, and hence, it selectivity, will decrease. 
It appears therefore reasonable to start evaluating \emph{simpler} predicate filters and then proceed \emph{depth-first} by adding more and more clauses. While reasonable, such procedure will possibly lead to explore large number of queries. However, most of the filters obtained by composing a high number of filters will likely lead to visualizations supported by very little sample points, and, hence, intrinsically \emph{unreliable}. 

Our VC dimension approach allows the system to recognize this fact and to use it in order to limit the search space. As discussed in Section~\ref{sec:VCselectivity}, the lower  the selectivity $\gamma$ of the filter condition $F$ of a given visualization, the higher the uncertainty $\bar{\epsilon}$,
\begin{equation*}
	\bar{\epsilon} \geq  \sqrt{\frac{d+\log_2 \delta^{-1}}{2n}\gamma^{-1}}
\end{equation*}

 In order for the difference between a candidate and the starting visualization to be deemed \emph{statistically relevant} their Chebyshev distance has to be higher than $\bar{\epsilon}$. 
The Chebyshev norm distance, and, hence, the maximum \emph{interest} of a candidate visualization is one. This clearly implies that all visualizations whose selectivity $\gamma$ is such that 
\begin{equation}\label{eq:lwbsele}
\gamma \leq \frac{d+\log_2 \delta^{-1}}{2n}    
\end{equation}
are not going to be interesting according to our procedure, and, hence, when exploring the space of possible recommendations, we can stop refining the queries once the selectivity of the candidate visualization drops below the threshold given by~\eqref{eq:lwbsele}. This allows to \emph{prune} the search space by eliminating from the exploration queries which are ``\emph{not worth to be considered}'' as possible recommendations.
\\
\\
While this may appear as a weakness of our approach, it is instead consistent with the basic principle of distinguishing statistically relevant phenomena from effects of the noise introduced by the sampling process. While visualizations which involve just a very limited number of sample points may appear to represent a very interesting distribution of a subset of the data, they are more likely to represent random fluctuation in the selection of the input sample than a true phenomenon in the global domain. 
% Further, in many practical scenarios, even if a candidate distribution appears to be interesting just based on its distribution, the fact that it involves a very limited amount of points, and, hence, captures a \emph{rare} sub-population of the dataset, would considerably scale back its ``\emph{practical value}''. Rather a less extreme phenomenon that however involves a much larger fraction of the population has instead more ``\emph{practical value}''.

By taking into consideration the selectivity of candidate visualizations, our method \emph{automatically adjusts} the threshold of interest for candidate visualization. %That is, the lower the number of records reported in a visualization, the higher the interest value which is required for the visualization itself to be actually considered relevant.

\section{Discussion}\label{sec:discussion}
 In this section, we discuss and motivate some guidelines to help the analyst determine which of the discussed tools are better suited for her actual setting.
\vspace{-2mm}
\subsection{Number of hypotheses being tested}
Consider a scenario for which the system is limited to the analysis of  a small number of possible visualizations. In this case, it is possible to evaluate which of these candidate visualizations are actually interesting with respect to the starting visualization $V_1$ by applying the $\chi^2$ testing with opportune correction for the number of hypotheses being tested, which in this case would correspond to the number of candidate visualizations. 

Further, if in the same exploration session the analyst wants also to evaluate which of the visualizations are interesting with respect to a different starting visualization $V_2$, this would require to treat all the additional candidate recommendations as additional hypotheses. 
In order to obtain FWER it will be necessary to correct for the number of hypotheses being tested, thus resulting in a loss of statistical power. Even though some FWER corrections such as the Holm or the H\"ochberg procedure allows to reduce the effect of the multiple hypotheses correction, compared to the the simple Bonferroni procedure, there is still a considerable decrease in the statistical power, in particular for settings for which only a low fraction of the hypotheses being tested have low $p$-values. Therefore the $\chi^2$ testing approach appears to be more suitable for settings with a limited number of candidate recommendations being evaluated.

\subsection{Bounding the complexity of the query class}
The properties of our method can also be used to determine a bound the VC-dimension of the query range space being considered when looking to ensure that any candidate recommendation who differs from the reference visualization by at least $\theta$ is actually marked as a safe recommendation.

Let $\vis{1}$ (resp., $\vis{2}$) denote the reference (resp., a candidate) visualization, and let $\gamma_1$ (resp., $\gamma_2$) is selectivity.

Given the size of the available dataset $|\mdataset{}|$, and the desired FWER control level $\delta\in(0,1)$, the maximum VC dimension which guarantees to meet these requirements can be obtained from~\eqref{eq:lwbsele} as:
\begin{equation*}
    d\leq \theta^2 \min\{\gamma_1,\gamma_2\} n-\log_2(\delta^{-1}).
\end{equation*}

This bound can be used as a guideline to limit the structure of the queries being considered. That is, it offers indications on the number of different features which can be considered when building the queries, and indications on their \emph{complexity}, intended as the number of clauses being used in their construction (as discussed in Section~\ref{sec:relationvcpredicate}).

\subsection{Preprocessing heuristics}
In this section we outline some preprocessing heuristics which allow to improve the effectiveness of our control procedures. 

\textbf{I. Removal of constant features}: Features which assume the same value in all the records of the sample can be safely ignored.

\textbf{II. Removal of identifier-type features}: Features which assign a different \emph{unique} value to each of the record can be removed. This is generally the case for \emph{identifier} features (e.g.,``Street address'' for real estate dataset). This appears justified as, due to the uniqueness of their values, they are not useful in constructing predicate conditions, nor they should be used as the ``\emph{group-by}'' attribute (i.e., the attribute in the x-axis). 
    % \item \textbf{Collapse of highly correlated features}: Features whose values are highly correlate can be \emph{collapsed}, that is they can be replaced by a single feature. This appears justified as predicates constructed using clauses using highly correlated features can be usually reduced to an equivalent \emph{non-redundant} predicate whose clauses involve just one feature.
All these heuristics share the fact that they allow to ignore some of the features (or columns) of the records. This will in turn impact both the search space and will allow to reduce the VC dimension of the query class being considered.

\section{Experiments}\label{sec:experiment}
In this section, we want to show how our framework can be applied towards both real data (i.e. the collected survey data) and synthetic data. We start by demonstrating different, problematic scenarios with our real dataset.
% Within the VizRec framework a user may select a column as the one she is interested in for the X-axis, a reference query, a set of columns to be deemed as possibly connected to the X-axis attribute and a class of queries over these columns to be explored by the system in a safe way. The reference query can be either fixed upfront by the user (e.g. investigate the beliefs of U.S. citizens vs. Canadian citizens) or be created dependent on a previous query. 
% Compared to a traditional statistical control technique, both scenarios are valid within the VizRec framework.
\subsection{Anecdotal examples}
\label{sec:anecdotal_examples}
Our first example shows that a system without statistical control may trick the user to believe in insights that are actually not valid and merely random. For this, assume the user wants to explore whether there is a subpopulation that believes differently from the overall population when it comes to whether obesity is a disease or not.

\begin{figure}[ht!]
\centering
    \begin{subfigure}[b]{0.3\textwidth}
        \centering
        \includegraphics[width=\textwidth]{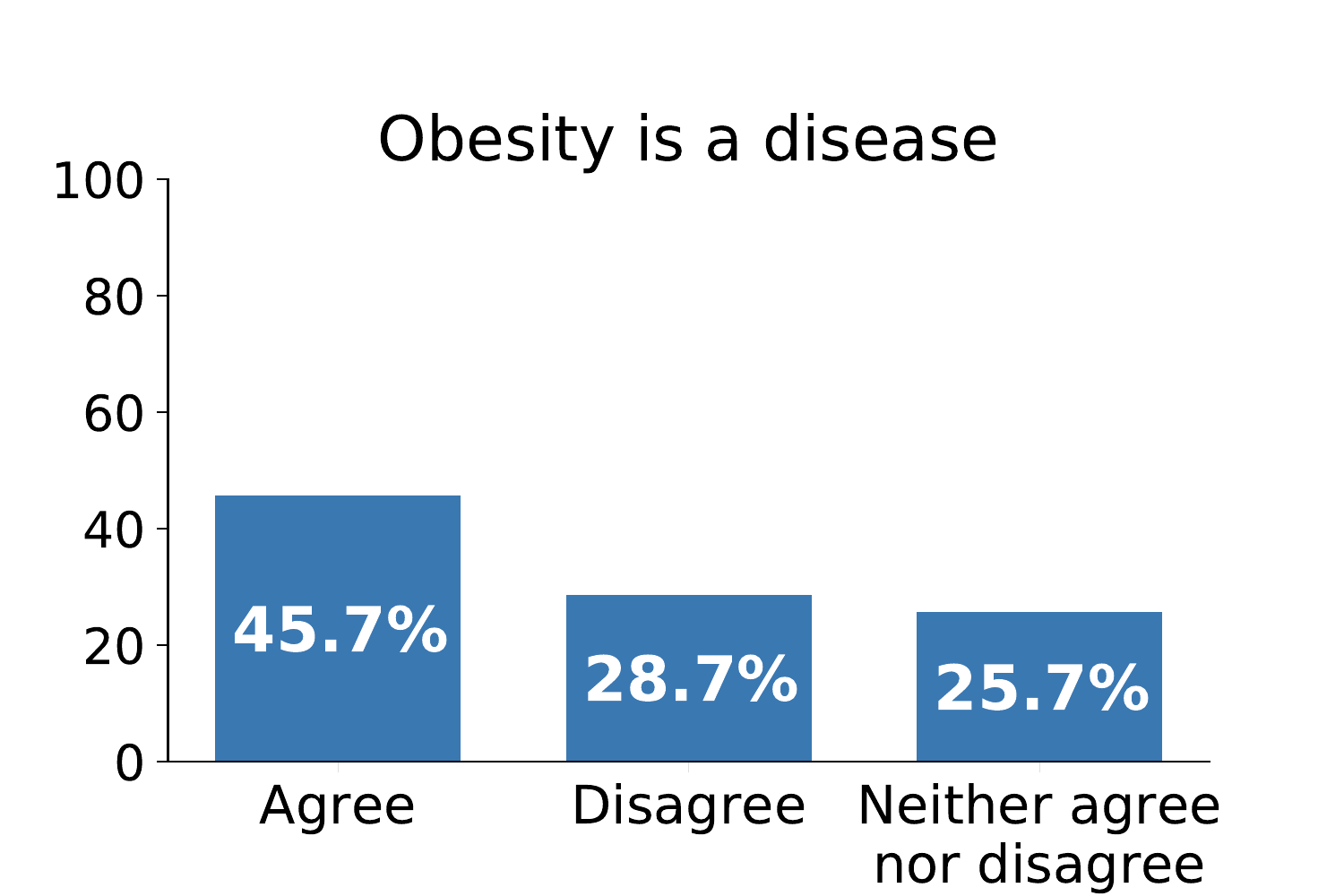}
        % \vspace{-6mm}
        \caption{Reference View}
    \end{subfigure}
    \begin{subfigure}[b]{0.3\textwidth}
        \centering
        \includegraphics[width=\textwidth]{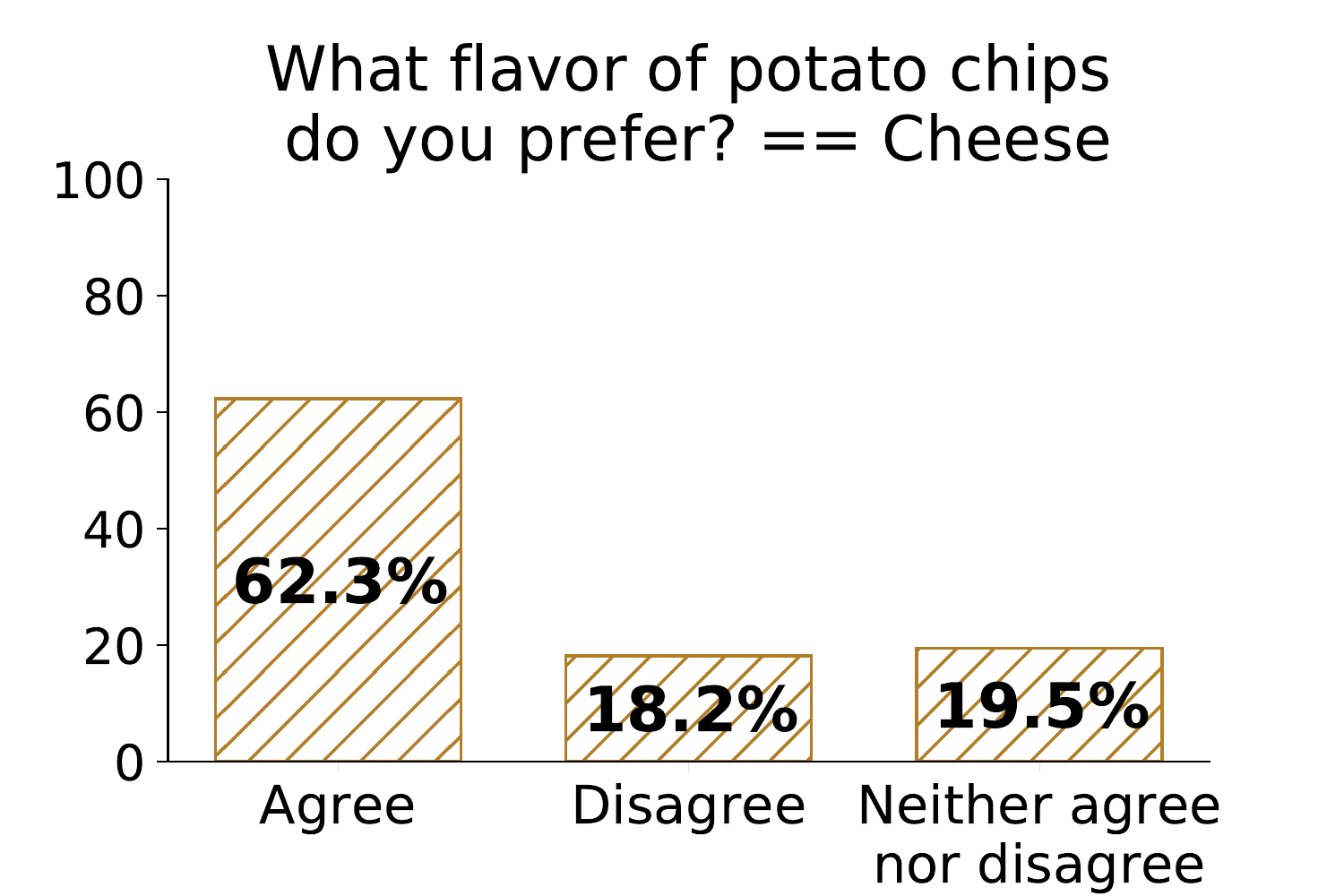}
        % \vspace{-6mm}
        \caption{SeeDB View 1}
    \end{subfigure}
    \\
    \begin{subfigure}[b]{0.3\textwidth}
        \centering
        \includegraphics[width=\textwidth]{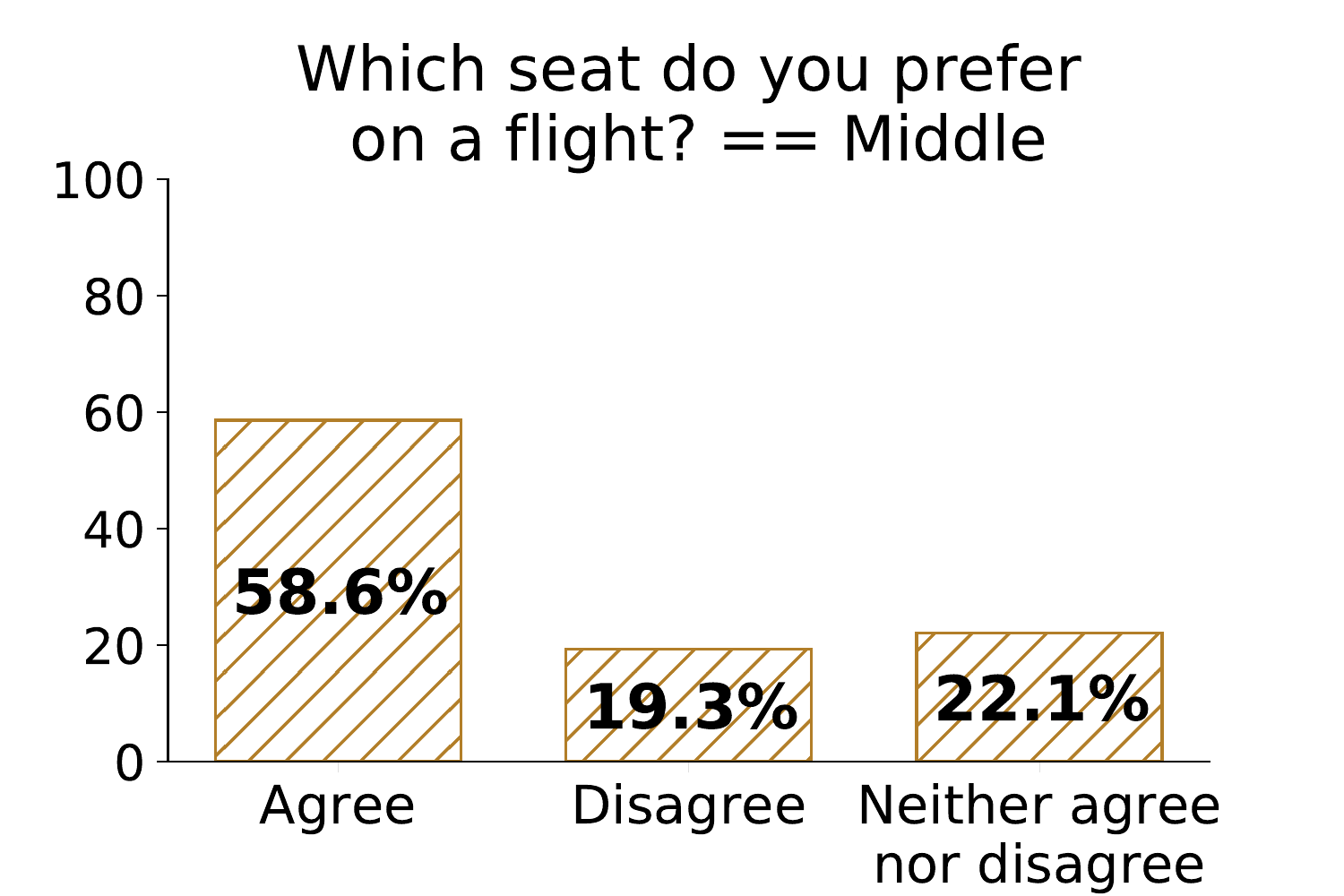}
        % \vspace{-6mm}
        \caption{SeeDB View 2}
    \end{subfigure}
    \begin{subfigure}[b]{0.3\textwidth}
        \centering
        \includegraphics[width=\textwidth]{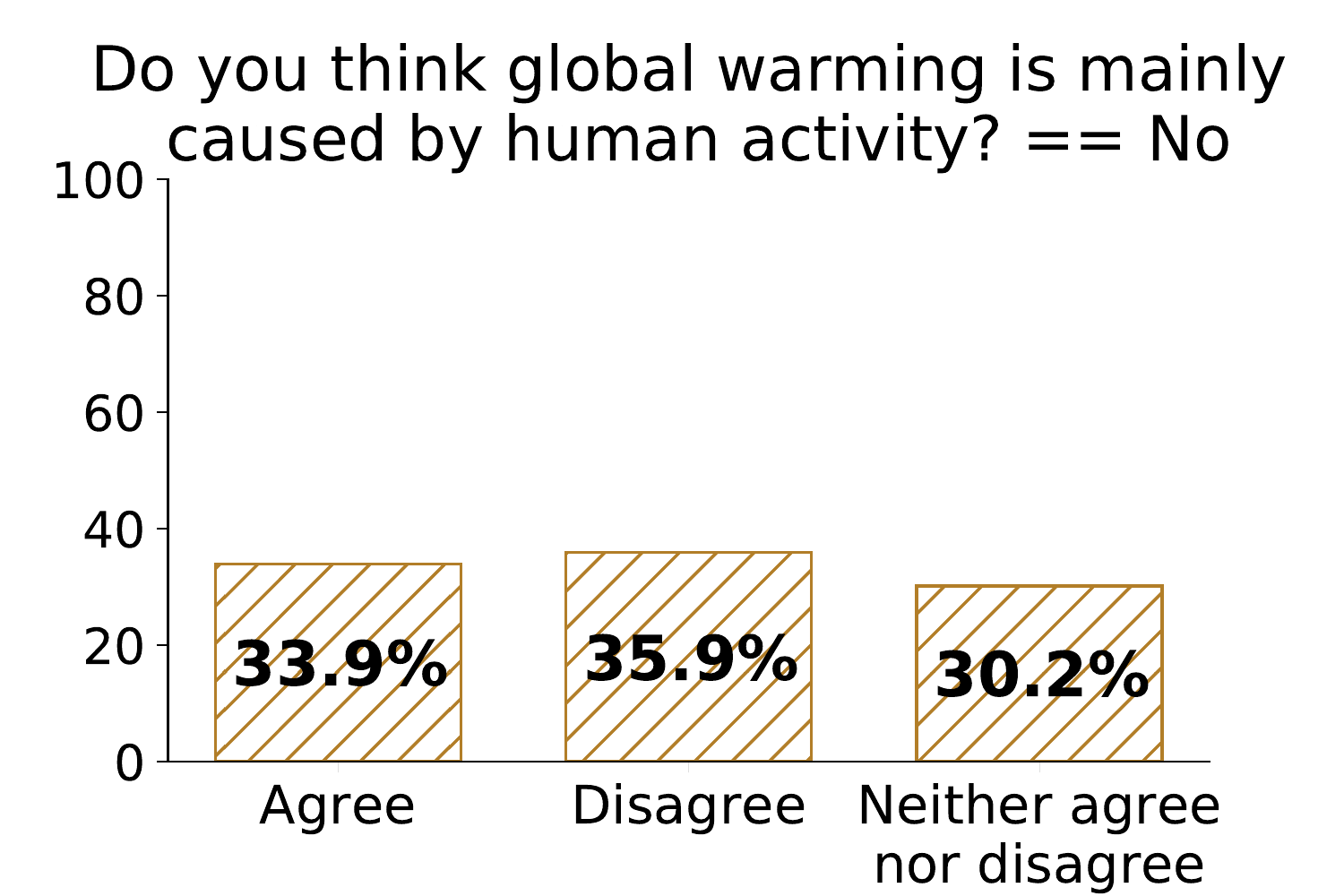}
        % \vspace{-6mm}
        \caption{SeeDB View 3}
    \end{subfigure}
    % \vspace{-2ex}
    \caption{VizRec would not mark any of these visualizations as statistically significant as the difference computed with respect to the reference is not larger than the uncertainty of estimating the bars correctly.}
    % \vspace{-2.5ex}
    \label{fig:anecdotal_no_rec}
\end{figure}
As depicted in \autoref{fig:anecdotal_no_rec} a user may falsely believe that people who prefer potato chips with Cheese flavour are more likely to believe that obesity is a disease. Though people that consume potato chips may lean more towards a view that obesity is a disease, the insight that in particular people who prefer the Cheese flavour are the most interesting subpopulation is questionable. Since for all the other flavours in our study\footnote{BBQ, Sea Salt \& Vinegar, Sour Cream and Onion, Jalapeno, Cheddar and Sour Cream, Original/Plain, I don't eat chips} no visualization is within the top results, picking particularly the Cheese flavour as a belief changer looks like a potential false discovery. The next recommended result seems even more random: An automatic recommender system would imply to the user that persons who prefer the middle seat are more likely to believe that obesity is a disease. This seems hard to understand and more like some random result that the system produced.
\\
\\
In our second example, we want to show that VizRec may identify correctly the top SeeDB recommendation(s) as being statistically valid.
\begin{figure}[H]
\centering
    \begin{subfigure}[b]{0.3\textwidth}
        \centering
        \includegraphics[width=\textwidth]{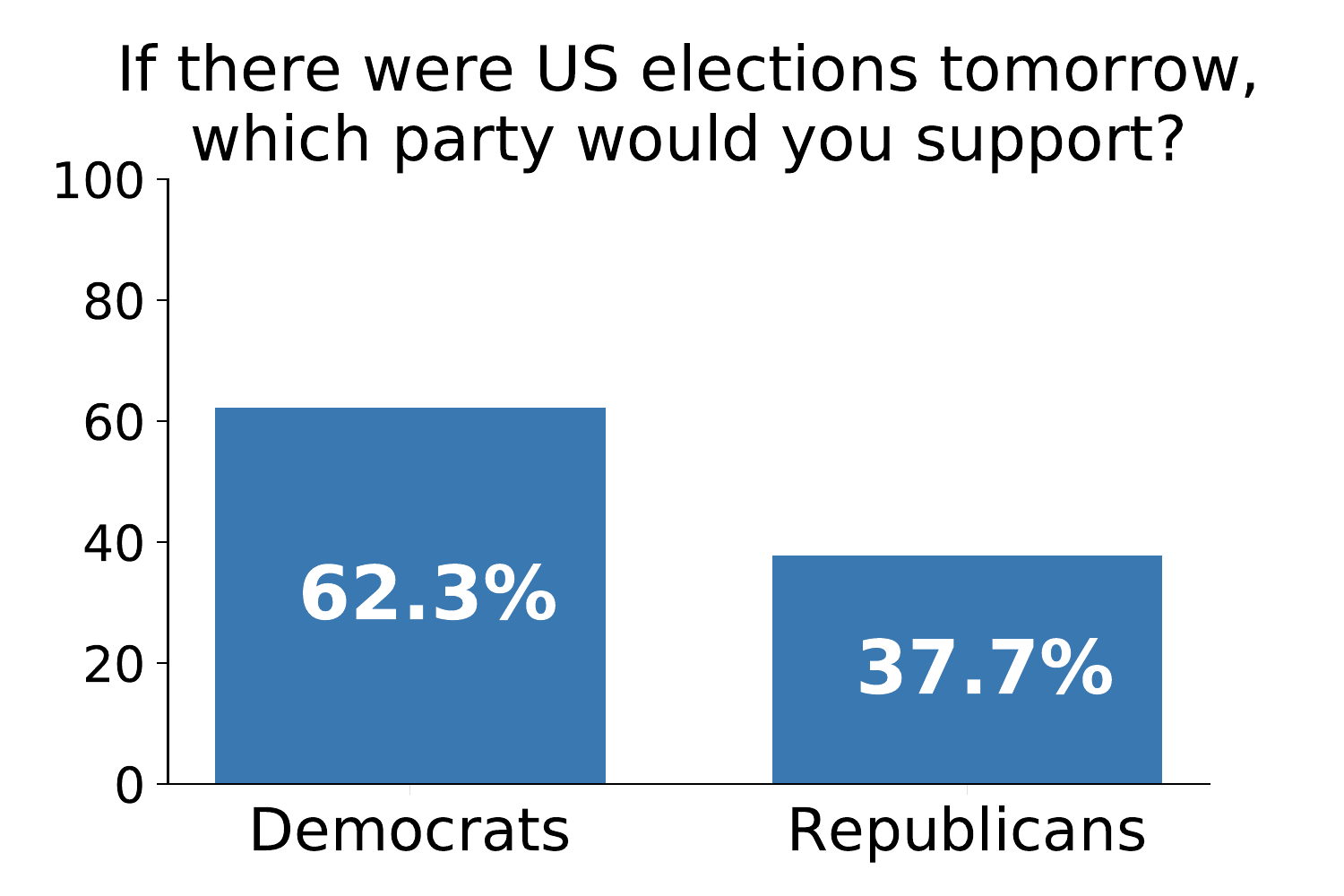}
        % \vspace{-6mm}
        \caption{Reference View}
    \end{subfigure}
    \begin{subfigure}[b]{0.3\textwidth}
        \centering
        \includegraphics[width=\textwidth]{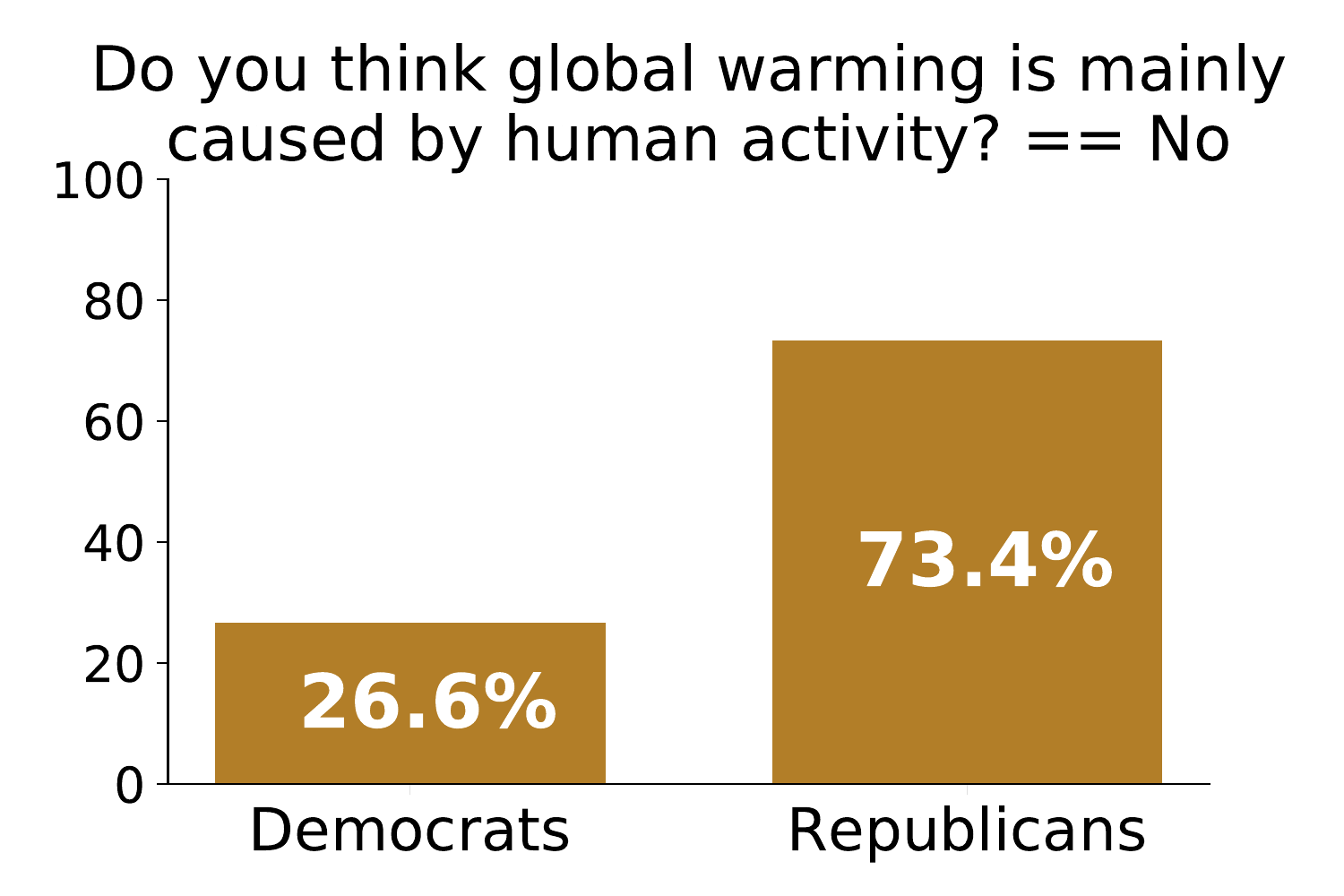}
        % \vspace{-6mm}
        \caption{SeeDB View 1}
    \end{subfigure}
    \\
    \begin{subfigure}[b]{0.3\textwidth}
        \centering
        \includegraphics[width=\textwidth]{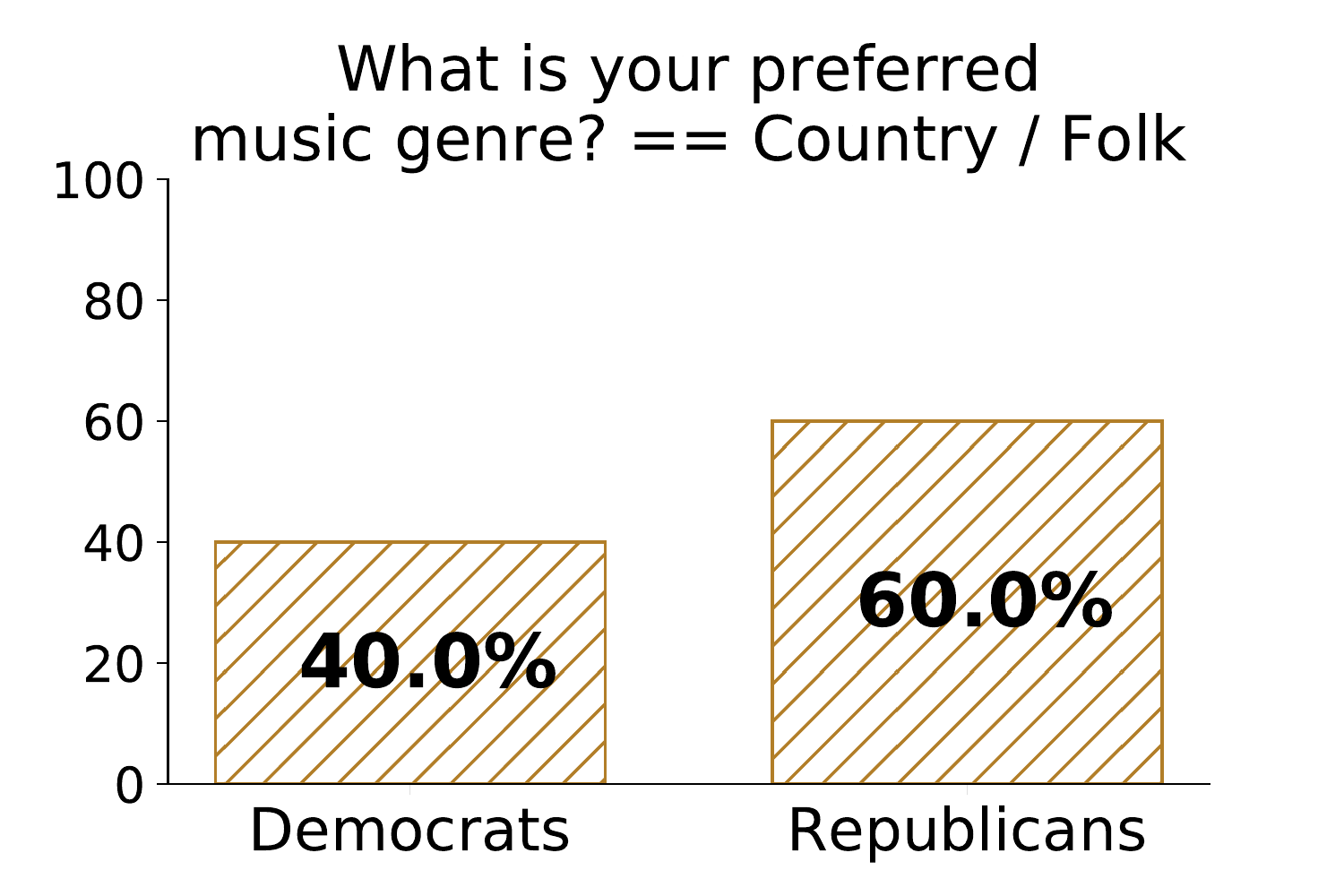}
        % \vspace{-6mm}
        \caption{SeeDB View 2}
    \end{subfigure}
    \begin{subfigure}[b]{0.3\textwidth}
        \centering
        \includegraphics[width=\textwidth]{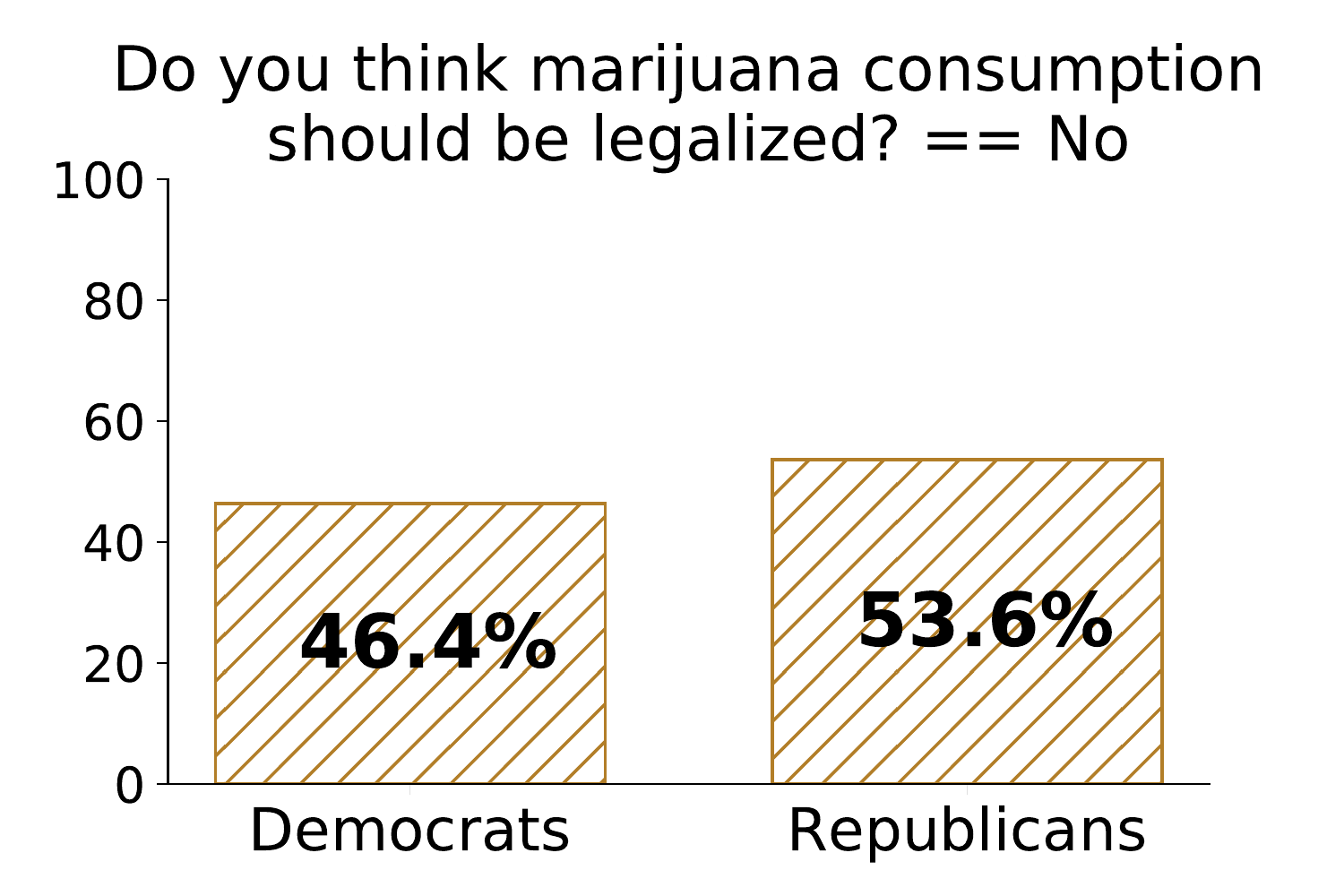}
        % \vspace{-6mm}
        \caption{SeeDB View 3}
    \end{subfigure}
    % \vspace{-2ex}
    \caption{VizRec would also recommend the top visualization, but declares the other visualizations not being statistically significant enough.}
    % \vspace{-2.5ex}
    \label{fig:anecdotal_top_same}
\end{figure}

Here, the user was interested in finding out whether there is a subpopulation that has a different voting behavior for the two main parties in the United States(cf. \autoref{fig:anecdotal_top_same}). The recommended top visualization coincides with the top SeeDB result and seems sound. However, VizRec prevents the user to attribute a preference towards the Republican party for persons who prefer to listen to Country and Folk music.
\\
\\
Finally, it can also be the case that SeeDB recommends something as the top result which the VizRec framework would rule out as being not significant at all. As depicted in \autoref{fig:anecdotal_top_different} again a questionable relation between people who prefer Cheese flavoured potato chips and those who belief in Astrology would get recommended.
% \vspace{-2mm}
\begin{figure}[H]
\centering
    \begin{subfigure}[b]{0.3\textwidth}
        \centering
        \includegraphics[width=\textwidth]{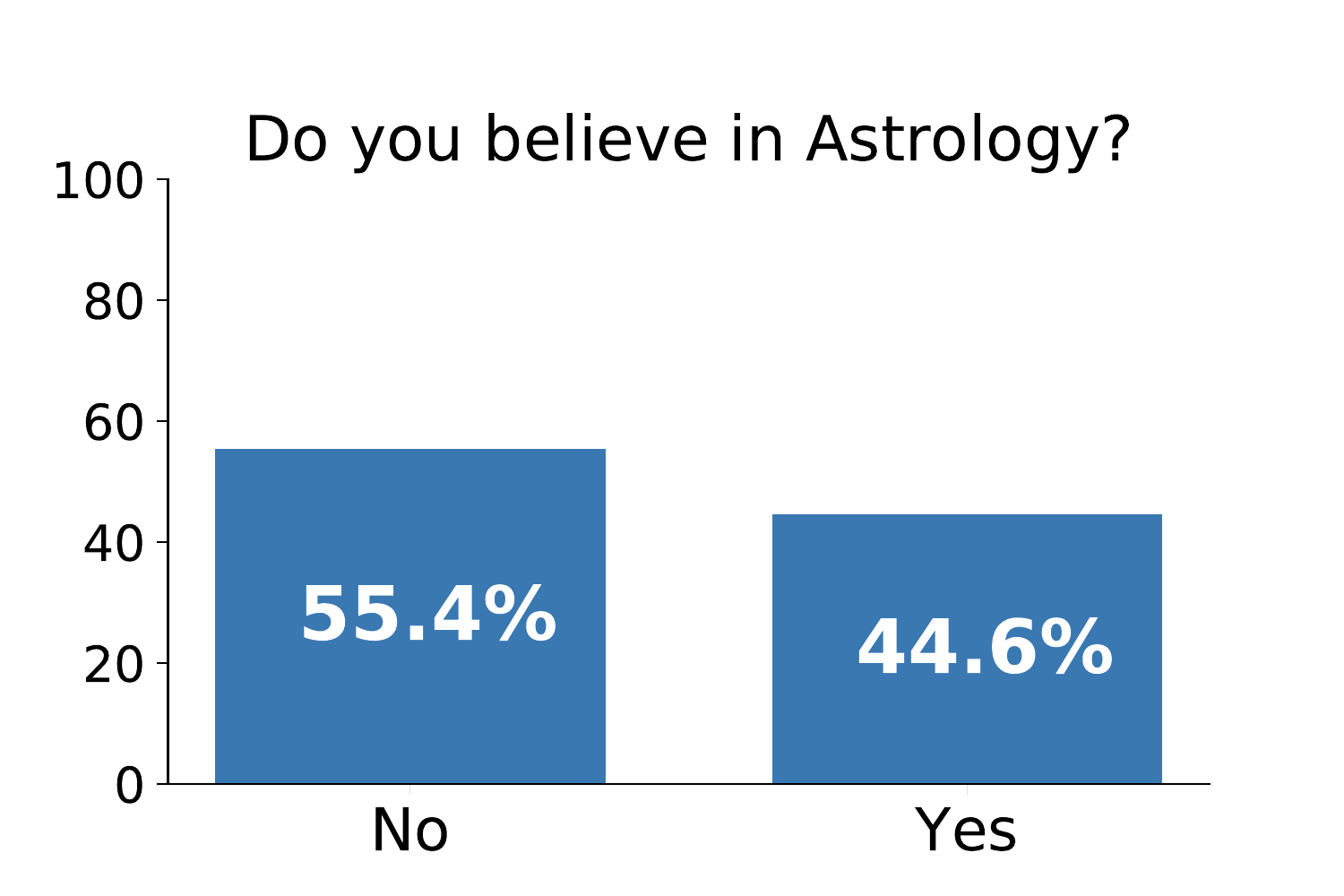}
        % \vspace{-6mm}
        \caption{Reference View}
    \end{subfigure}
    \begin{subfigure}[b]{0.3\textwidth}
        \centering
        \includegraphics[width=\textwidth]{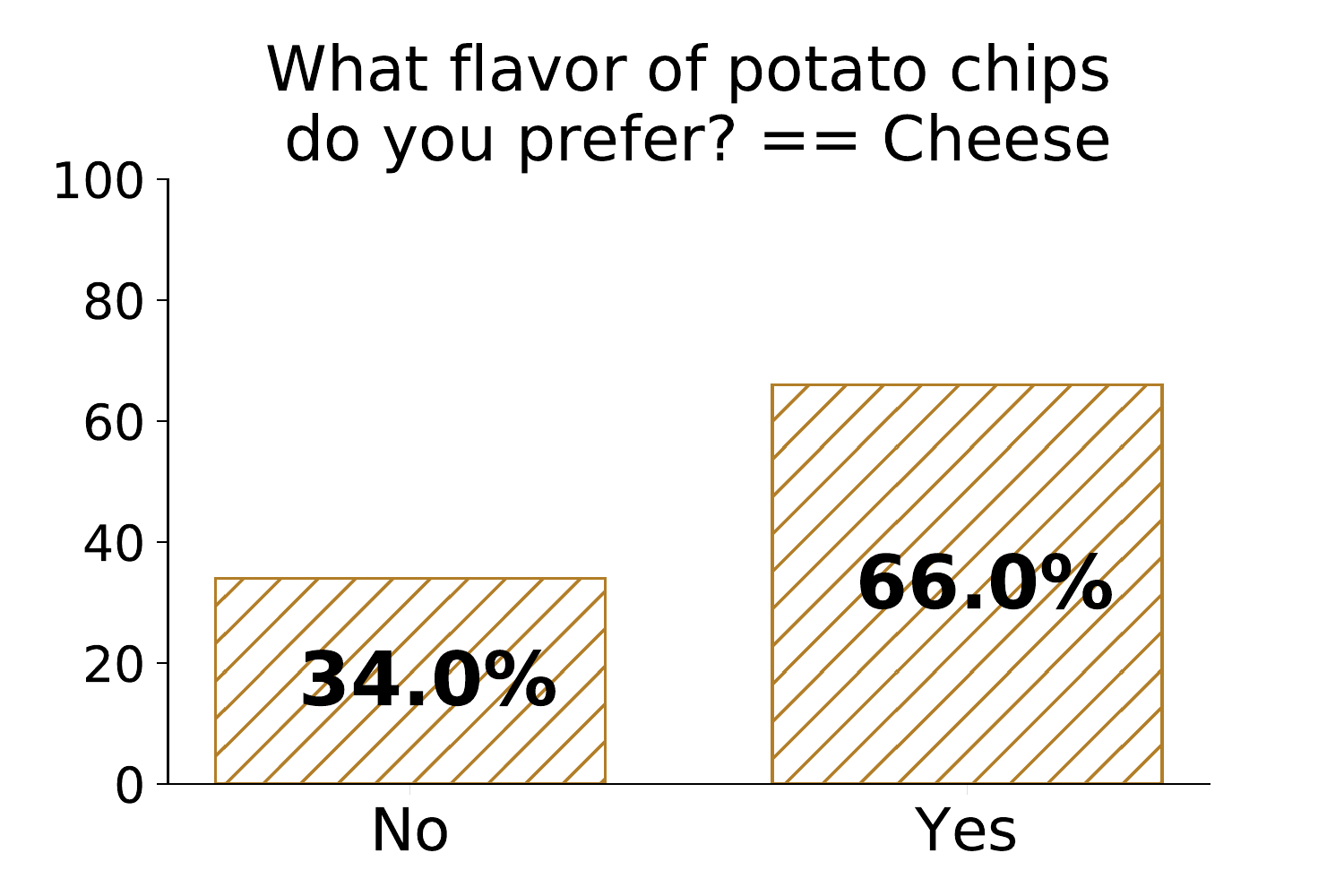}
        % \vspace{-6mm}
        \caption{SeeDB View 1}
    \end{subfigure}
    \\
    \begin{subfigure}[b]{0.3\textwidth}
        \centering
        \includegraphics[width=\textwidth]{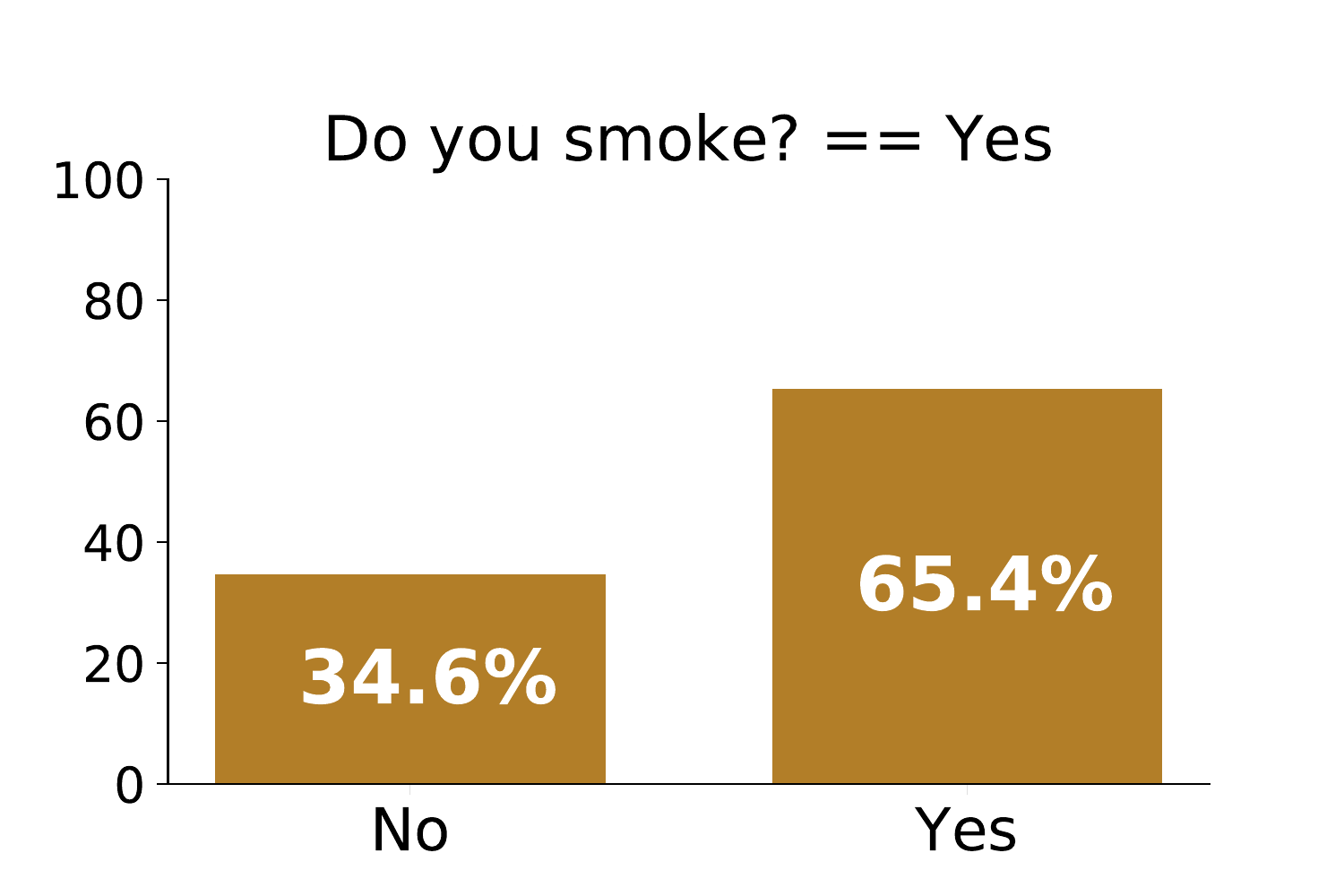}
        % \vspace{-6mm}
        \caption{SeeDB View 2}
    \end{subfigure}
    \begin{subfigure}[b]{0.3\textwidth}
        \centering
        \includegraphics[width=\textwidth]{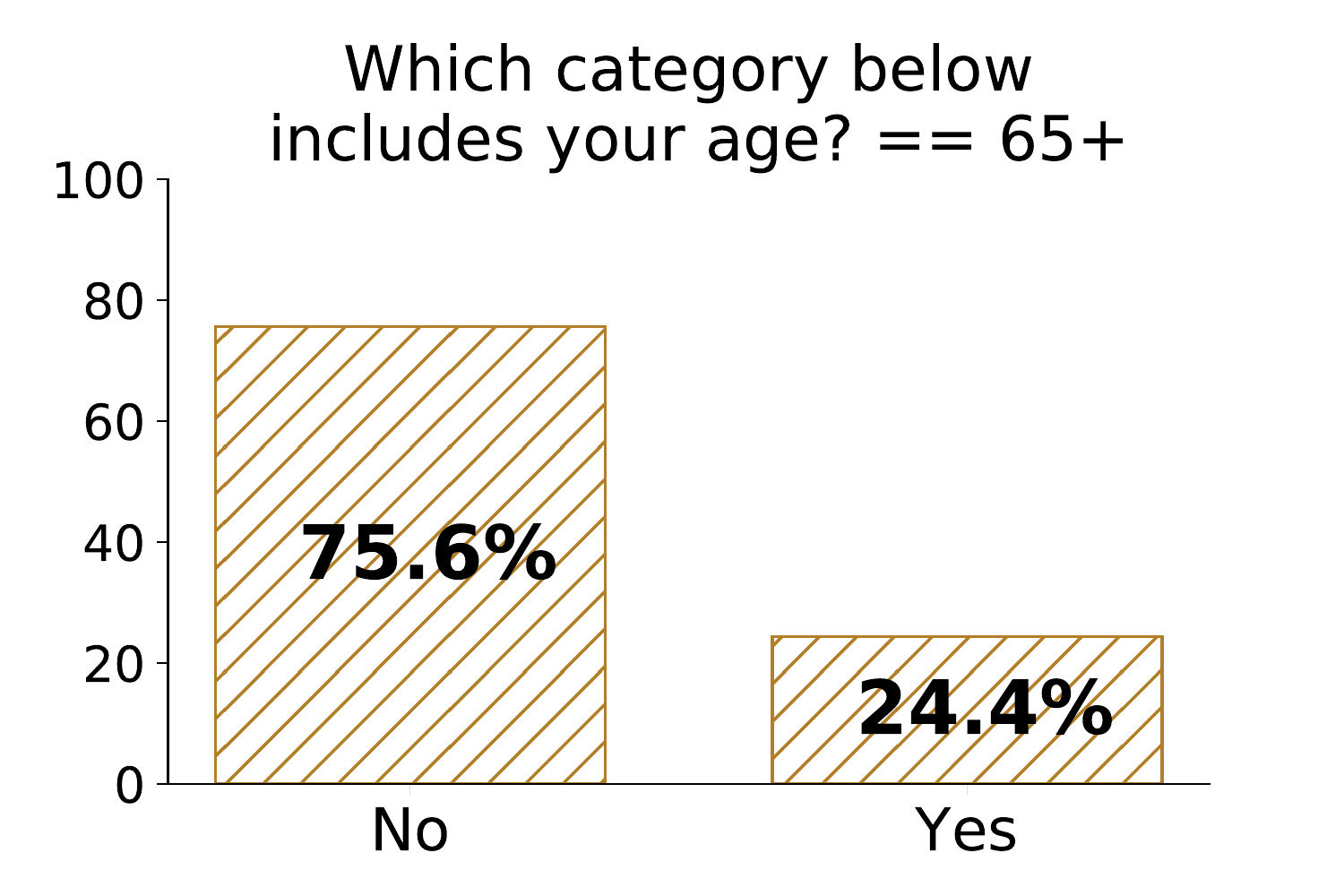}
        % \vspace{-6mm}
        \caption{SeeDB View 3}
    \end{subfigure}
    % \vspace{-2ex}
    \caption{VizRec would not recommend the top visualization, but the second ranked one.}
    % \vspace{-2.5ex}
    \label{fig:anecdotal_top_different}
\end{figure}
However, this is actually a false discovery and not backed statistically. On the contrary, VizRec deemed a relation between smokers and a belief in Astrology statistically sound. In this example another interesting problem is demonstrated: Though the interestingness scores of the top results are pretty close to each other (e.g. here $d_\infty \approx 0.21$), the score $d_\infty$ itself is not sufficient to determine statistical relevance. Particularly, employing a simple cutoff may lead to false discoveries. Besides the complexity of the exploration space, the number of samples used to estimate both the reference query and the candidate query need to be accounted for too.
\\
\\
These anecdotal examples demonstrate that without any statistical control the user is likely to run into making false discoveries. Our VC approach does not only account for the number of samples but also for the complexity of the data exploration and is thus a well-suited tool for avoiding false discoveries in a visual recommendation system.
\subsection{Random data leads to no discoveries}
A meaningful baseline for any safe visual recommendation system is to make sure that random data does not lead to any recommendations. To demonstrate that the VC approach will not recommend any false positives, we generated a synthetic dataset with uniformly distributed data. $100,000$ samples were generated in total with the first column being selected as aggregate and the other 3 columns as features. The aggregate is uniformly distributed over $\lbrace 1, 2, 3, 4\rbrace$ and each of the $3$ features are uniformly distributed over $\lbrace 1, ..., 9 \rbrace$.

With simple predicates (i.e. a queries formed from $\leq$ clauses solely) there are $1331$ visualizations to be explored (a dummy value of $+\infty$ was used in the queries to make a feature active or not. E.g. consider a query of the form $\left( X_1 \leq 8 \right) \land \left( X_2 \leq +\infty \right) \land \left( X_3 \leq 3 \right)$. In this query, feature $X_2$ has no effect on the rows returned since $\left( X_1 \leq 8 \right) \land \left( X_2 \leq +\infty \right) \land \left( X_3 \leq 3 \right) \equiv \left( X_1 \leq 8 \right) \land \left( X_3 \leq 3 \right)$. Note that using $+\infty$-values in the clauses does not change the VC dimension.). As a reference, a uniform distribution over $\lbrace 1, 2, 3, 4\rbrace$ was chosen. This means, that the expected support of any visualization is at least $10^5/9^3$ samples which is a fair amount to estimate $4$ bars.
\begin{figure}[ht!]
\centering
    \includegraphics[width=.8\columnwidth]{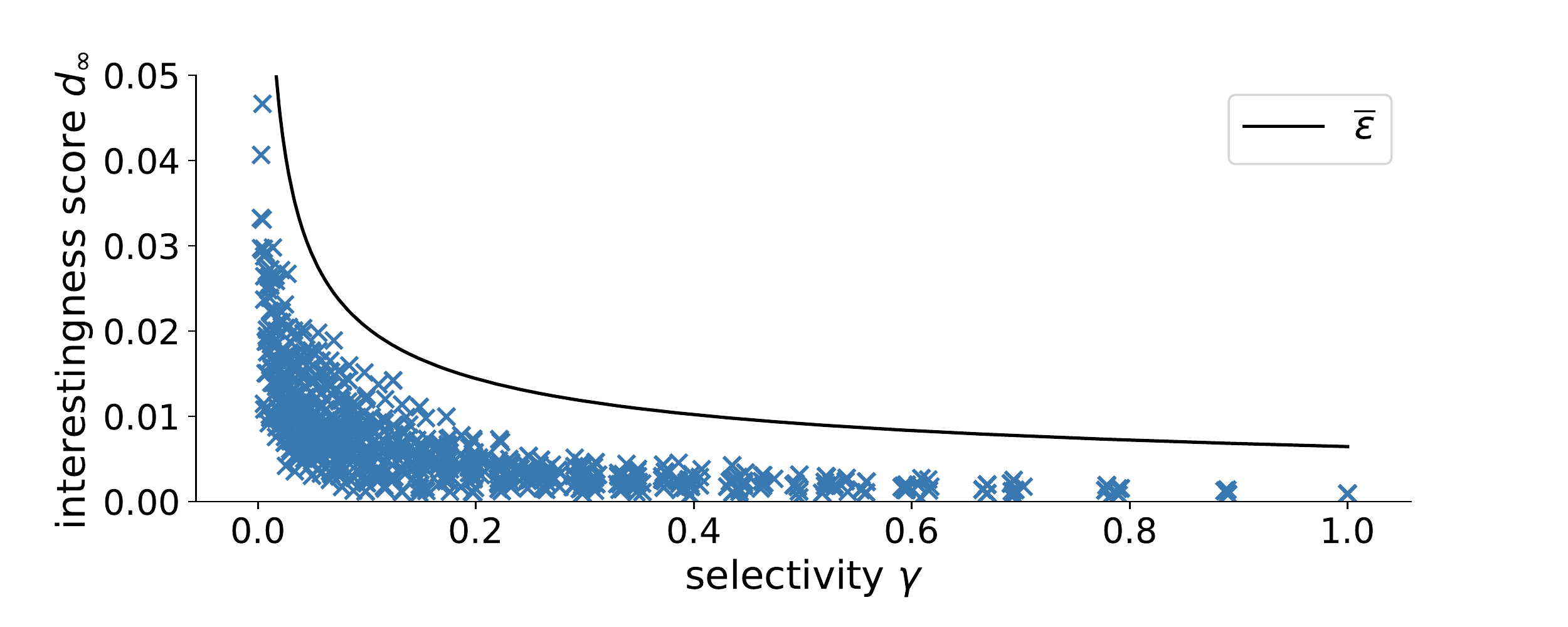}
    % \vspace{-8mm}
\caption{Blue dots represent interest scores all evaluated visualizations. The  $\bar{\epsilon}$ curve denotes the threshold for recommendation using VC dimension = 4 to achieve control at level $\delta = 0.05$. The lower the VC dimension the more the $\bar{\epsilon}$ curve takes the form of an ``L''. Since there are no visualizations with scores higher than the $\overline{\epsilon}$-curve, no visualizations get recommended from the generated random data.
}
\label{fig:exp1_vc}
\end{figure}
When not accounting for the multiple comparison problem $p$-values below the threshold of $\alpha=0.05$ occur inevitably. A system without FWER guarantees would classify them thus as false positives. Using Bonferroni (or other comparable corrections) remedies this while, however, incurring a noticeable loss in statistical power.

In comparison, the lowest $\epsilon$ the VC approach guarantees is $\epsilon_{\min} = 0.0059$. As discussed in \ref{ConcreteVC} the required threshold $\overline{\epsilon}$ to be met by the Chebychev norm induced distance measure $d_\infty$ depends on the selectivity $\gamma$ of the query. The necessity of this can be observed in \autoref{fig:exp1_vc} too. With the interestingness scores(distances) being lower than the curve defined by $\overline{\epsilon}$ for all queries in \autoref{fig:exp1_vc} the VC approach does not recommend any false positives in this experiment. Using different distributions instead of the uniform one showed comparable results.

\subsection{Statistical Testing vs. VC approach}
\label{sec:exp_chisq}
We now show that while statistical testing in the form of a \chisqtest is in general not a wrong ingredient for building a VRS, in some situations it is unable to spot meaningful visual differences which would however be opportunely recognized by our \system{} approach.

Assume we had a query that yielded $m=1,200$ out of $n=10,000$ samples and a perfect estimator for the true distribution function of the reference and the query distribution. These distributions shall be distributed as in \autoref{fig:exp2_chi2vsvc}. Thus, the \chisqtest would yield a p-value of of $2.54 \cdot 10^{-5}$ implying that they are different when no more than $1967$ visualizations under Bonferroni's correction are tested.
\begin{figure}
\centering
     \includegraphics[width=.7\columnwidth]{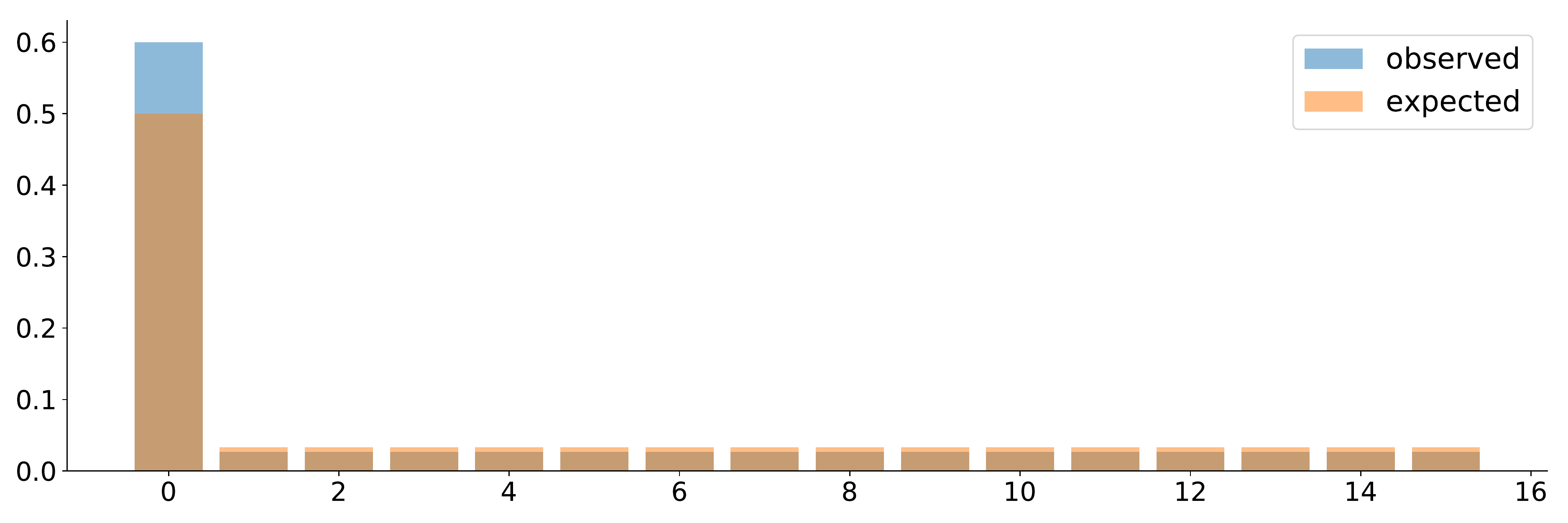}
\caption{Comparing two close distributions that however should not be recommended since the visual difference criterion according to the VC dimension approach is not met.}
\label{fig:exp2_chi2vsvc}
\vspace{-5mm}
\end{figure}
However, at a VC dimension of $10$ ($\delta=0.05$) the required $\overline{\epsilon}$ must be at least $0.22$ which is nearly twice as high as the $0.1$ difference at the first bar as shown in \autoref{fig:exp2_chi2vsvc}. Thus, the VC approach would not select this visualization as being significantly different enough given the modest sample size. Using the $\chi^2$ test would recommend this visualization since it only spots that there is a difference but not whether the difference is significant enough. 
%Similar examples yielding comparable results can be constructed to show that cases exist where the $\chi^2$ test would be pick up a visually not significantly enough signal in the data.

In practice, scenarios like the one presented here occur especially due to outliers in the data. I.e. it could happen that for one feature value there are only 1-5 samples that would lead without any correction to a positive recommendation. Though a heuristic may ignore visualizations with less than $5$ samples, this would come at the cost of ignoring rare phenomena and by using some magic number to define the threshold. I.e. when ignoring visualizations with less than $b$ samples per bin, easily an example with $b+1$ samples for some bin could be found where $\chi^2$ would pick up a non-visually significant visualization.

This underscores that a VRS using the \chisqtest{} would correctly identify two visualizations being different but can not guarantee a meaningful difference in terms of a distance which is crucial to build usable systems without luring the user into a false sense of security. One may argue that filtering out visualizations after having performed statistical testing would remedy this(which may work in practice when the interestingness score is high enough), but then there was no guarantee that the distances observed is statistically guaranteed.
Whereas the question is dependent on the scenario (i.e. which query and which guarantees a system needs to fulfill) we want to point out, that there is a simple way to use the \chisqtest to control significant distance too.
\\
\\
Furthermore we want to underscore the point that the Chi-squared test is indeed a very powerful test but that the correct estimation of the distribution dominates the selectivity. %In other words, the number of samples a query yields.
I.e., when we guarantee that the estimates for the probability mass function are close enough to the true values, a testing procedure like \chisqtest will even under a million possible hypothesis only need a small number of samples to spot a difference between two distributions. We thereby define the required number of point estimates to be in the range of $2 \leq K \leq 100$ bars as meaningful.
\begin{figure}[ht!]
\centering
     \includegraphics[width=.7\columnwidth]{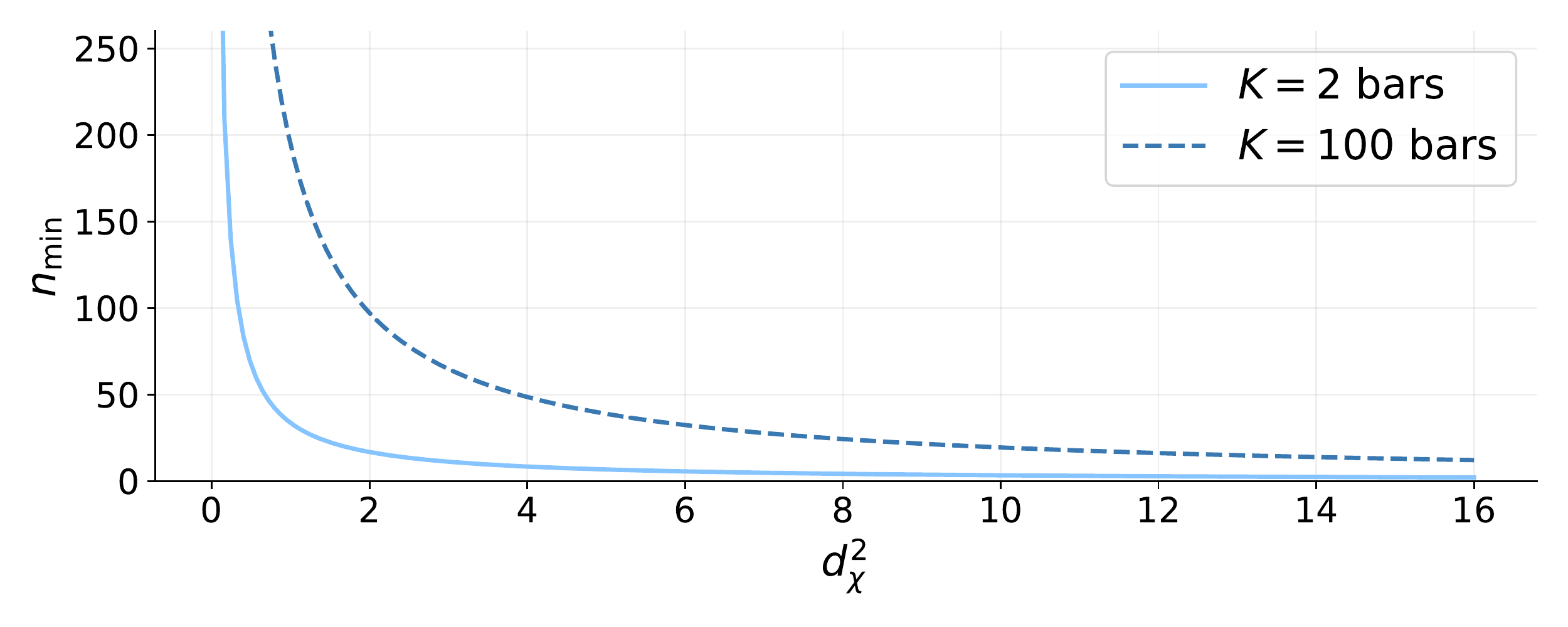}
\caption{Chi-square distance $d_{\chi^2}$ and minimum number of samples $n_{\mathrm{min}}$ required for the \chisqtest to reject the null hypothesis assuming Bonferroni correction with $\alpha' = 0.05$ and $10^6$ queries.}
\label{fig:exp3_minchi2}
\vspace{-2mm}
\end{figure}
In \autoref{fig:exp3_minchi2} it is shown that even low values for the $\chi^2$-distance $d_\chi^2$ only require queries with hundreds of samples to be identified correctly.
\section{Related Work}
\cite{matteovc} introduced the VC approach to provide $\epsilon$-approximations for the selectivity of queries. Whereas they also consider joins in addition to multi-attribute selection queries, by restricting to $\mathrm{AND}$ conjunctions over multiple attributes as used naturally in OLAP and visual recommender systems we were able to lower the required VC dimension.

When comparing continuous aggregates for visualizations using e.g. kernel-density estimation for distribution plots other statistical testing procedures like the Kolmogorov-Smirnov test can be used. Also a viable alternative to goodness-of-fit tests like Fisher's exact test, \chisqtest or the Kolmogorov-Smirnov test independence tests like Kendall's tau or Spearman's rank correlation test may be used. A combination of multiple tests also accounting for differences in shape as detailed in \cite{porter2008testing} can strengthen statistical guarantees on visualizations being different. Indeed, this is quite similar to hypothesis based feature extraction and selection approaches as in \cite{CHRIST201872}\cite{STEPPE199647}.
% \\
% \\
% However, for visualization purposes we specifically introduce notion of things being visually different, i.e. Chisquare (without non-central...), independence tests etc. test to simple hypothesis... No guarantees over distance.
% \cite{kim2015rapid} gives even stronger guarantee, needs too many samples, only single visualization, ...

Recent work \cite{salimi2018hypdb} introduced the problem of group-by queries leading to wrong interpretations, specifically in the case when \sqlinline/AVG/ aggregates are used. To remedy this, the notion of a biased query is introduced. However, they do not account for the multiple comparison problem and also have no significant distance notion.

\cite{CIDRBrown} introduced various control techniques for interactive data exploration scenarios. Whereas it accounts for the multiple comparison problem, it does not solve the problem of pointing out a statistical different enough distance between two visualizations.

\cite{vartak2014seedb} provides an approach to effectively compute visualizations over an exponential search space by using reuse of previous results and approximate queries. Visualizations are recommended by treating group-by results as normalized probability distributions and using various distance measures between two probability distributions to yield a ranking in order to recommend top-$k$ interesting visualizations. The authors found that the actual choice of the distance did not really alter results, which does not come at a great surprise given their relations as pointed out in \cite{gibbs2002choosing}. 
%However, since most of these measures are symmetric the outcome does depend on the initial query chosen. Whereas in the case of COUNT and AVG a statistical savvy user may attribute a meaning to the distance notion, in the case of MIN/MAX aggregates the system proposed in \cite{vartak2014seedb} leads to many confusing recommendations. These are not only prone to outliers but besides \cite{vartak2014seedb} ignores the multiple comparison problem in general. The technique of treating bar graphs of general group-by queries as probability distributions is questionable from a statistical point of view since it obfuscates the relation to the underlying data.

As described in \cite{shneiderman1996eyes} zooming into particular interesting regions of the data is a key task performed by many users in the setting of data exploration. Our technique provides a simple and effective methodology which can be applied to a wide range of data. For example, an user may be given a geospatial dataset characterizing purchase power and want to have assistance in exploring interesting subregions. We believe our VC approach can be easily extended to allow for more complicated query types such as these.
\section{Conclusion}
In this work, we demonstrated why users should build visualization recommendation systems with mechanisms to ensure statistical guarantees in order to prevent users from making false discoveries or regarding noisy data as relevant. As a novel way which supplements classical statistical testing as in \cite{controlfd} we introduced a technique based on statistical learning comparable in its power to \cite{matteovc}. We demonstrated various trade-offs and problems to consider when using either technique and provided a simple heuristic on how to explore the vast search space more efficiently by pruning it through different preprocessing steps.

% this can't be included for blind review
\section{Acknowledgments}
This research was funded in part by the DARPA Award 16-43-D3M-FP-040,  NSF Award IIS-1562657, NSF Award RI-1813444 and gifts from Google, Microsoft and Intel.
\bibliographystyle{abbrvnat}
\bibliography{refs} 
\newpage
% \newpage
\appendix
\section{Proof on VC dimension bound for classes of queries}\label{app:vcproof}
\begin{proof}[Proof of Lemma~\ref{lem:VCbound}]
The proof is by induction on $i$: in the base case we have $i=1$. In this case, the VC dimension of $\mathcal{Q}$ corresponds to the VC dimension of the union of $\alpha_1$ closed intervals and $\beta_1$ open intervals on the line. By a simple modification of the result of Lemma~\ref{lem:intervalunions}, we have that it has VC dimension at most $2\alpha_1 + \beta_1$.
Let us now inductively assume that the statement holds for $i>1$. In order to conclude the proof we shall verify that it holds for $i+1$ as well.

Assume towards contradiction that there exists a set $X$ of $\sum_{j=1}^{i+1}2\alpha_j+\beta_j$ points that can be shattered by $\mathcal{Q}$.
From the inductive hypothesis, we have that for any subset of $X$ with more than $\sum_{j=1}^{i}2\alpha_j+\beta_j$ cannot be shattered by the family of query functions which can express constraints only on the features $1,2,\ldots,i$. 
Without loss of generality let $X'$ denote one of the maximal subsets of $X$ which can be shattered using only the constraints on the features $1,2,\ldots,i$. The following fact is important for our argument

\textbf{Fact 1:} Recall that the queries in $\mathcal{Q}$ are constituted by logical conjunctions (i.e., ``$\texttt{and}$'') of  connections (i.e.,``$\texttt{or}$'' statement) of constraints on a feature. Hence, for any function in $\mathcal{Q}$ if any of the $i+1$ connections are such that they assume value ``$\texttt{false}$'', then the query will not select such point \emph{regardless} of the value of the remaining $i$ connections being conjuncted.

Consider any assignment $\pi$ of $\{0,1\}$ to the points in $X'$ and let $r_{\pi}$ the range which realizes such shattering.

If $r_\pi$ would assign to any point in $X\setminus X'$ value ``0'', then, according to the structure of the queries, no constraint on the $(i+1)$-th feature would allow to assign to it value ``1'', and, hence, it would not be possible to shatter $X$.

Note that for any assignment $\pi$ of $\{0,1\}$ to the points in $X'$, there may may not exists two ranges $r_1$ and $r_2$ such that based solely on constraints on the first $i$ features, on would assign ``0'' to a point in  $X\setminus X'$ and the other would assign ``1'' to the same point. If that would be the case, then i would be possible to shatter $\sum_{j=1}^i 2\alpha_j+\beta_j$ points using just constraints on the first $i$ features and this would violate the inductive hypothesis. 

Without loss of generality, in the following we can therefore assume that for any assignment $\pi$ of $\{0,1\}$ to the points in $X'$ the ranges that realize such assignment just based on the first $i$ features would assign ``1'' to all the points in $X\setminus X'$. This implies that the shattering of the points in $X\setminus X'$ relies \emph{solely} on the constraints on the values of the $i+1$-th feature. 

Consider now the points in $X\setminus X'$, according to our assumption $|X\setminus X'|= 2\alpha_{i+1}+\beta_{i+1}$. As discusses in the base of the induction, it is not possible to shatter $2\alpha_{i+1}+\beta_{i+1}$ points using just $\alpha_{i+1}$ (resp., $\beta_{i+1}$) closed (resp., open) intervals on the $(i+1)$-th dimension. 

Hence it is not possible to shatter $X$ and we have a contradiction.
\end{proof}

\section{Bounding the VC dimension of a given query class}\label{app:reduction}
Algorithm~\ref{alg:reduction} operates ``\emph{consolidating}'' redundant clauses in and equivalent minimal set. 
Such consolidation is achieved by first (lines 2-10) determining the minimal open intervals for each feature, and then by merging overlapping closed intervals whenever possible. Note that given a specific choice of a query class, Algorithm~\ref{alg:reduction}, needs to be run just once to bound its VC dimension.
\begin{algorithm}
\caption{Reduction to non-redundant intervals}\label{alg:reduction}
\begin{algorithmic}[1]
\Procedure{ReduceIntervals}{}
\Statex \textbf{Input:} A conjunction of k clauses expressed as open or closed intervals
\Statex \textbf{Output:} An equivalent non-redundant conjunction of clauses
\State $C\leftarrow False$ \Comment{Initialization non redundant intervals}
\If{ Any constraints of the kind ``$c\leq a_i$'' given as input}
\State $l_{\leq}\rightarrow \max a_i$ such that $c \leq a_i$ is clause;
\Else
\EndIf
\State $r_{\geq}\rightarrow \min a_i$ such that $c \geq a_i$ is clause;
\State $l_{<}\rightarrow \max a_i$ such that $c \neq a_i$ is clause;
\State $r_{>}\rightarrow \min a_i$ such that $c \neq a_i$ is clause;
\State $I\leftarrow$ list of closed intervals $(a_i\leq c \leq b_i)$ sorted according to the values of the $a_i$ increasingly.
\State $op_l = \leq$
\If{$\exists (a_i,b_i)\in I|a_i\leq \max\{l_{\leq}, l_<\}$}
\State $l'\leftarrow \max\{b_i|(a_i,b_i)\in I\text{ and } a_i\leq \max\{l_{\leq}, l_<\}\}$;
\Else
\State $l'\leftarrow \max\{l_{\leq}, l_<\}$;
\If{$l'\neq l_{\leq}$}
\State $op_l \leftarrow <$
\EndIf{}
\EndIf{}
\State $op_r = \geq$
\If{$\exists (a_i,b_i)\in I| b_i\geq \min\{r_{\geq}, r_>\}$}
\State $r'\leftarrow \min\{a_i|(a_i,b_i)\in I\text{ and } b_i\geq \min\{r_{\geq}, r_>\}\}$;
\Else
\State $r'\leftarrow \min\{r_{\geq}, r_>\}$;
\If{$r'\neq r_{\geq}$}
\State $op_r \leftarrow >$
\EndIf{}
\EndIf
\If{$l'>r'$}
\State \textbf{return} $C\leftarrow \texttt{True}$
\ElsIf{$l'=r'\text{ and }\left((op_l = \leq )\text{ or }(op_r=\geq)\right)$}
\State \textbf{return} $C\leftarrow \texttt{True}$
\Else
\State $C\leftarrow (c\ op_l\ l')\vee (c\ op_r\ r')$
\EndIf
\State Update $I$ by removing all intervals $(a_i,b_i)\text{ such that }b_i\leq l'\text{ or }a_i\geq r'$;
\State {$I'\leftarrow \emptyset$}
\While{$I \neq \emptyset$} \Comment{Remove intervals contained in larger intervals}
\State $a \leftarrow \min \{a_i|(a_i,b_i)\in I\}$
\State $b\leftarrow \max\{b_i|(a,b_i)\in I\}$

\State $I'\leftarrow I' \cup \{(a,b)\}$
\State $I\leftarrow I\setminus \{(a_i,b_i)| a_i\geq a\text{ and }b_i\leq b\}$
\EndWhile
\While{$I' \neq \emptyset$} \Comment{Merge intervals with superpositions}
\State $a\leftarrow \min \{a_i|(a_i,b_i)\in I'\}$
\State $b\leftarrow b | (a,b) \in I')$
% \algstore{reduction}
% \end{algorithmic}
% \end{algorithm}
% \begin{algorithm}
% \begin{algorithmic}[1]
% \algrestore{reduction}
\If{$\exists b_i | (a_i,b_i)\in I'\text{ and }a_i\leq b$ }
\State $b'\leftarrow \max\{b_i|(a_i,b_i)\in I'\text{ and }a_i\leq b\}$
\State $I\leftarrow I\setminus \{(a_i,b_i)| a_i\geq a\text{ and }b_i\leq b'\}$
\State $I'\leftarrow I' \cup \{(a,b')\}$
\Else{}
\State $I'\leftarrow I' \cup \{(a,b)\}$
\State $C \leftarrow C \vee (a\leq c\leq b)$
\EndIf{}
\EndWhile
\State \textbf{return} $C$ \Comment{Output non-redundant constraints}
\EndProcedure
\end{algorithmic}
\end{algorithm}

\section{Modified \chisqtest}
\label{sec:modifiedchisq}
As described in \cite{chi2book} by modifying the \chisqtest to 
test a more complex null hypothesis
\begin{equation}
H_0: \sum_{k=1}^K \frac{\left( p_k - \hat{p}_k \right)^2}{p_k} < \epsilon_{\chi^2}
\end{equation}
together with the corresponding test statistic 
\[
T = n\gamma \sum_{k=1}^K \frac{\left( p_k - \hat{p}_k \right)^2}{p_k}
\]
following a non-central $\chi^2$ distribution with $K-1$ degrees of freedom and non-locality parameter 
\[
\lambda_{\mathrm{nc}} = n \gamma \epsilon_{\chi^2} = m \epsilon_{\chi^2}.
\]
The distance guaranteed then is the $\chi^2$-distance $d_{\chi^2}$. In fact, this testing procedure can be further generalized to use other distances that belong to the family of f-divergences like the total variation distance or Kullback-Leibler divergence via an suitable transformation or by using non-parametric models to estimate the distance measures before comparing them via statistical testing as in described in more detail in \cite{fdivergence}.

\subsection{VC bounds and Chernoff-Hoeffding bounds}\label{app:vcchernoff}
When only a single visualization needs to be recommended, using Chernoff-Hoeffding bounds is both easier and yields a better guarantee than a VC-based bound. However, when testing multiple visualizations on the same data, the VC approach dominates Chernoff-Hoeffding bounds.
\begin{figure}[ht!]
\centering
     \includegraphics[width=.7\columnwidth]{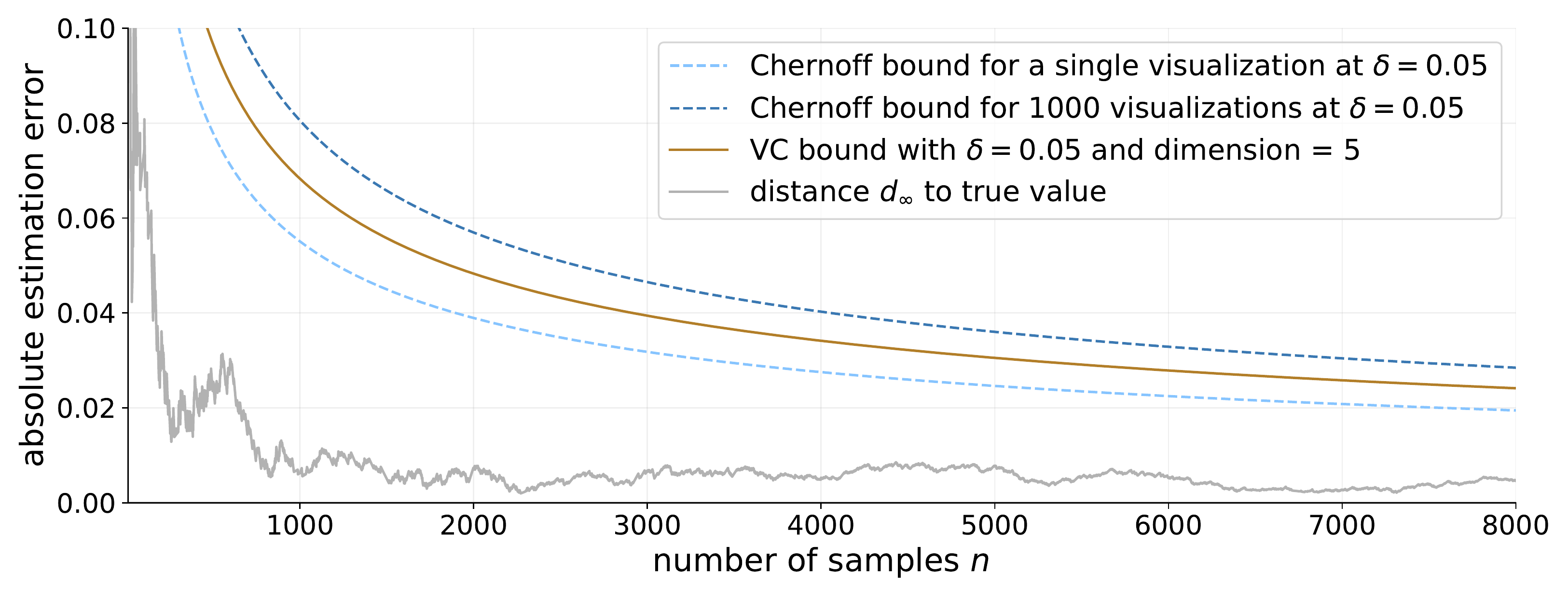}
\caption{Observed estimation error for a single random experiment and bounds obtainable at a significance level of $\delta = 0.05$. For a VC dimension of $d=1$, the bound is only slightly worse than the Chernoff bound of a single visualization. However, with increasing number of visualizations Chernoff bounds are more conservative than bounds obtained via VC.}
\label{fig:exp4_chernoffvsvc}
\end{figure}

To demonstrate this, random data was generated to estimate the probability mass function of a biased Binomial distributed according to $\sim \mathrm{Bin}(10, 0.3)$. In \autoref{fig:exp4_chernoffvsvc} a sample path is shown together with the theoretical bounds for a setup with a VC dimension of $5$. Comparing to a Chernoff-Hoeffding bound of a single visualization, the VC approach outperforms Chernoff-Hoeffding bounds when learning multiple visualizations at once.

\subsection{Restricting the search space}
In this experiment we show that without restriction of the search space, it is likely that only few visualizations get recommended. For this, we took again the survey dataset and removed constants and identifier alike columns. To make it visually more distinct, we also added some artificial columns like a running identifier. 
\begin{figure}[H]
    \centering
    \includegraphics[width=.6\columnwidth]{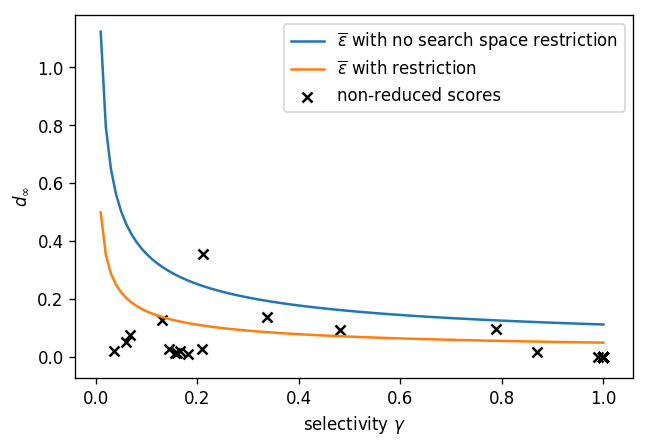}
    % \vspace{-5mm}
    \caption{$\overline{\epsilon}$-curves before and after restricting the search space. The distance $d_\infty$ needs to be higher than the uncertainty quantified via $\overline{\epsilon}$.}
    \label{fig:searchspace_restriction}
\end{figure}
As shown in \autoref{fig:searchspace_restriction}, restricting the search space leads to more discoveries. I.e. the goal is to not be over-conservative and apply meaningful preprocessing or feature selection by the user first.

% \section{VC dimension}
% Let $X = \Omega$ and $R=\mathcal{F}$ form a range space $(X, R)$. Then, the concepts we want to learn are given a filter $F$ and a random variable $X$ corresponding to the X-axis of the visualization and whose support is w.l.o.g. $\lbrace 1, ..., K \rbrace$:
% \[
% r = \lbrace \omega \in \Omega \vert \mathbbm{1}_{F \cap \lbrace X = k \rbrace}(\omega) = 1 \rbrace.
% \]
% Then according to \autoref{thm:vcsamplec} for a sample $S \in \mathcal{F}$
% \begin{align*}
%     \vert \mathbb{P}\left( \mathbbm{1}_{F \cap \lbrace X = k\rbrace}\right)  - \mathbb{P}\left( S \cap F \cap \lbrace X = k\rbrace \vert S \right)\vert \leq \epsilon \\
%     \Longleftrightarrow \vert \mathbb{P}\left( \mathbbm{1}_{F \cap \lbrace X = k\rbrace}\right)  - \sum_{\omega \in S}\mathbbm{1}_{F \cap \lbrace X = k\rbrace}(\omega) \vert \leq \epsilon \\
% \end{align*}

\newpage
%\section{SQL query for normalized histograms}
%\begin{minipage}[t]{\textwidth}
%  \centering
%\begin{minted}{sql}
%SELECT CAST(COUNT(Y) AS Float) / 
%(SELECT COUNT(*) FROM R(X,Y) WHERE F) FROM R(X,Y)
%WHERE F GROUP BY X;
%\end{minted}
%\end{minipage}
%\section{Kullback-Leibler divergence}
%\cite{gibbs2002choosing} is a good source for an overview of probability metrics.

\end{document}